\documentclass[10pt,aps,pra,twocolumn,floatfix,nofootinbib,superscriptaddress,longbibliography]{revtex4-2}
\usepackage{placeins}

\usepackage{bm,graphicx,mathrsfs,amsmath,amssymb,mathtools,makecell,bbm, amsthm,dsfont,color,nicefrac,framed,enumitem,tikz,physics,wrapfig,amsfonts,tcolorbox,times,txfonts,lipsum,nicematrix}
\newtheorem{corollary}{Corollary}
\usepackage{comment}

\renewcommand{\ket}[1]{\lvert #1\rangle}
\renewcommand{\bra}[1]{\langle #1\rvert}

\newtheorem{theorem}{Theorem}

\newcommand{\iden}{\mathbbm{1}}

\definecolor{alecolor}{RGB}{110,190,110}      
\definecolor{changescolor}{RGB}{0,120,0}

\definecolor{mycitecolor}{RGB}{204,170,0}     
\definecolor{mylinkcolor}{RGB}{110,190,110}   
\definecolor{myurlcolor}{RGB}{110,190,110}    

\usepackage[
    colorlinks=true,
    citecolor=mylinkcolor,
    linkcolor=mycitecolor,
    urlcolor=myurlcolor
]{hyperref}\usepackage{ragged2e}
\definecolor{equationcolor}{RGB}{154,135,198}
\usetikzlibrary{arrows.meta}
\usetikzlibrary{positioning}
\usetikzlibrary{calc}

\begin{document}
\title{The Prepare and Broadcast Scenario}

\author{Tailan S. Sarubi}
\affiliation{Physics Department, Federal University of Rio Grande do Norte, Natal, 59072-970, Rio Grande do Norte, Brazil}
\affiliation{International Institute of Physics, Federal University of Rio Grande do Norte, 59078-970, Natal, Brazil}
\author{Moisés Alves}
\affiliation{Physics Department, Federal University of Rio Grande do Norte, Natal, 59072-970, Rio Grande do Norte, Brazil}
\affiliation{International Institute of Physics, Federal University of Rio Grande do Norte, 59078-970, Natal, Brazil}
\affiliation{QuIIN - Quantum Industrial Innovation, EMBRAPII CIMATEC Competence Center in Quantum Technologies, SENAI CIMATEC, Av. Orlando Gomes 1845, 41650-010, Salvador, BA, Brazil. }
\author{Santiago Zamora}
\affiliation{International Institute of Physics, Federal University of Rio Grande do Norte, 59078-970, Natal, Brazil}
\author{Vinícius F. Alves}
\affiliation{Physics Department, Federal University of Rio Grande do Norte, Natal, 59072-970, Rio Grande do Norte, Brazil}
\affiliation{International Institute of Physics, Federal University of Rio Grande do Norte, 59078-970, Natal, Brazil}
\author{A.~de~Oliveira~Junior}
\affiliation{Center for Macroscopic Quantum States bigQ, Department of Physics,
Technical University of Denmark, Fysikvej 307, 2800 Kgs. Lyngby, Denmark}
\author{Rafael Chaves}
\affiliation{International Institute of Physics, Federal University of Rio Grande do Norte, 59078-970, Natal, Brazil}

\begin{abstract}
We introduce the dimension-restricted prepare and broadcast (PAB) scenario, which generalizes standard prepare-and-measure frameworks. Here, the system prepared by a sender undergoes a broadcasting transformation before being locally measured by multiple receivers. We develop a hierarchy of classical, quantum, and nonsignalling models describing this scenario, characterize their corresponding correlation sets, and derive new families of Bell-like inequalities together with linear and semidefinite programming methods for their certification. First, assuming shared randomness, we prove that the hierarchy collapses into a single set whenever we consider only one measurement per party. Then, considering multiple possible measurements, we show that PAB scenarios allow the activation of nonclassicality, revealing genuinely nonclassical features in resources that admit classical descriptions in standard prepare-and-measure or Bell settings.
\end{abstract}

\maketitle

\section{Introduction}

Prepare-and-measure (PAM) scenarios constitute one of the most fundamental frameworks in quantum information theory. In these scenarios, a sender (Alice) encodes classical or quantum data into a physical system and sends it to a receiver (Bob), who measures it to extract information. This simple setup underlies many protocols~\cite{bohr2026quantum}, including quantum random access codes (QRACs)~\cite{Ambainis2002,Nayak1999,Ambainis2008QRAC,pawlowski2010entanglement,tavakoli2015quantum}, dense coding~\cite{moreno2021semi,tavakoli2021correlations}, dimension witnessing~\cite{gallego2010device,tavakoli2015quantum}, entanglement certification~\cite{tavakoli2018semi,moreno2021semi,pauwels2022entanglement,vieira2023interplays}, and non-stabilizerness certification~\cite{zamora2025semi}. It also forms the basis of semi-device-independent randomness generation~\cite{li2012semi,lunghi2015self,passaro2015optimal,alves2026semi} and cryptographic protocols~\cite{pawlowski2011semi,woodhead2014imperfections}. Beyond these applications, PAM scenarios play a central role in the foundations of quantum theory~\cite{pawlowski2009information,Sarubi2026,chaves2015information,chaves2018causal}, providing a simple and operationally clear setting in which quantum advantages over classical systems can be rigorously characterized.

A defining feature of standard PAM scenarios is their simple causal structure, in which a prepared system is sent directly to a measurement device, with no intermediate processing or redistribution. This simplicity makes the framework analytically tractable, but also restrictive. Since a single receiver acts on the system, multipartite correlations and distributed information processing lie outside its scope. Prepare-transform-measure extensions address this limitation by allowing intermediate parties to operate on the communicated system before forwarding it~\cite{bowles2015testing}. In these scenarios, QRACs can be implemented sequentially with multiple receivers, revealing trade-offs between their information gains, with applications to the self-testing of quantum instruments~\cite{mohan2019sequential,miklin2020semi}.
\begin{figure*} 
    \centering
    \includegraphics{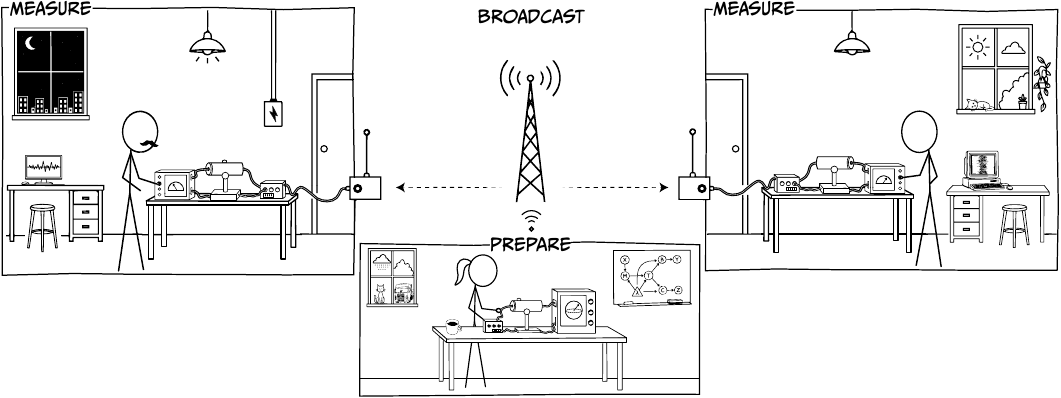}
    \caption{\textbf{Prepare and broadcast scenario}. In the center, Alice prepares a physical system according to her chosen input. The system is then sent through a broadcasting channel, which distributes its information to two distant receivers, Bob and Charlie. Each receiver performs a local measurement in their own lab and records an outcome. By repeating the experiment many times, the parties can study the correlations between Alice’s preparation and the two receivers’ results. Illustration inspired by XKCD~\cite{XKCD}.}
    \label{F-PAB-cartoon}
\end{figure*}

Relatedly, broadcasting scenarios, in which a local transformation maps a single system into a multipartite one and distributes its information among several observers, have been used to reveal nonclassical features that are hidden in standard settings. In Bell scenarios, such transformations can activate nonlocal correlations~\cite{bowles2021single,villegas2024nonlocality,boghiu2023device} and strengthen device-independent certification~\cite{polino2024experimental}. Broadcasting transformations have also been combined with prepare-and-measure scenarios in Refs.~\cite{wang2019characterising,ioannou2022receiver,jia2025characterizing} using a generalized NPA hierarchy~\cite{navascues2007bounding}. However, these approaches require prior knowledge of the inner products between the prepared quantum states, and therefore do not apply when the dimension is the only assumption.

This leaves open the dimension-bounded broadcasting problem. Once a prepared system is redistributed to several receivers, one must determine whether nonclassicality comes from the communicated system itself, from the broadcasting transformation, or from both, without relying on detailed knowledge of the preparations. This motivates the central question of this work. How can one certify nonclassicality in a dimension-bounded communication scenario once the communicated system is redistributed to several receivers, without assuming detailed knowledge of the prepared states?

We address this question by introducing the dimension-restricted prepare and broadcast (PAB) scenario (see Fig.~\ref{F-PAB-cartoon} for a pictorial representation). In this setting, Alice sends a bounded-dimensional system to a broadcasting transformation, which redistributes its information to two receivers, Bob and Charlie, who then perform local measurements and generate correlations. The key point is that the communicated system and the broadcasting resource must be treated separately. A violation of a fully classical PAB inequality certifies nonclassicality somewhere in the experiment, but not necessarily in Alice's message. To distinguish these possibilities, we introduce a hierarchy of models in which Alice's message and the broadcaster's resource can be classical, quantum, or nonsignalling. The corresponding hybrid models provide the relevant benchmarks for deciding whether an observed violation can be explained by a classical message assisted by a nonclassical broadcaster, or instead requires a genuinely quantum communicated system.

We characterize these models through PAB inequalities and optimization methods. For the fully classical and classical--nonsignalling models, we use polytope methods to derive separating inequalities, while for the classical--quantum and fully quantum models we develop semidefinite-programming relaxations to bound their achievable values. Within this framework, we prove an input-free collapse theorem: if the receivers have no measurement choices and only shared randomness is allowed, the broadcasting stage reduces to a single effective POVM on Alice's system, so quantum messages give no advantage over classical messages of the same dimension. We then show that receiver inputs restore nonclassicality and allow activation effects. In particular, PAB inequalities can reveal nonclassicality of noisy broadcasting channels below standard Bell-type thresholds, and can certify nonclassicality of Alice's communicated system beyond hybrid classical-message benchmarks.

The paper is organized as follows. In Sec.~\ref{thepam}, we review the standard PAM scenario. In Sec.~\ref{sec:pab}, we introduce the PAB scenario and its hierarchy of classical, quantum, and nonsignalling models. In Sec.~\ref{sec:inequalities}, we derive PAB inequalities and present the corresponding linear and semidefinite programming methods. In Sec.~\ref{sec:A NO-GO THEOREM}, we prove the input-free collapse theorem and discuss the entanglement-assisted case. In Sec.~\ref{sec:quantum-broadcasting}, we demonstrate activation of nonclassicality in the broadcasting channel and in the prepared states. We conclude in Sec.~\ref{secFINAL REMARKS:VIOL}.

\section{A brief review of the prepare-and-measure scenario}\label{thepam}

The prepare-and-measure scenario consists of a preparation device and a measurement device. In each run, the preparer receives a classical input $x\in\mathcal X$ and, as a function of that input, generates a physical system that is sent to the measurement device. Classically, this system is represented  by a message $m\in\mathcal M$ and the measurement device decodes it based on a chosen input $y\in\mathcal Y$, producing an output $b\in\mathcal B$. The observed behavior of the experiment is therefore described by the conditional distribution $p(b\vert x,y)$ (see left panel of Fig.~\ref{fig:PAM_scenario}). The corresponding causal structure is represented by the directed acyclic graph (DAG) shown in the right panel of Fig.~\ref{fig:PAM_scenario}. As a consequence, it imposes a probability decomposition of the form


\begin{equation}
p(b\vert x,y)=\sum_{\lambda,m} p(\lambda)\,p(m\vert x,\lambda)\,p(b\vert m,y,\lambda).
\label{eq:classical_pam}
\end{equation}

As is well-known, this model is trivial  without additional restrictions since the preparer could encode the entire input $x$ into the transmitted message. To exclude this possibility, several restrictions are possible, ranging from dimensional~\cite{gallego2010device,pauwels2022almost}, energy~\cite{van2017semi}, and informational~\cite{chaves2015device,tavakoli2020informationally} restrictions on the prepared states to relational constraints limiting their fidelities~\cite{tavakoli2021semi} or overlaps~\cite{wang2019characterising}. Here, we assume a constraint on the dimension $d$ of the communicated system, requiring the message alphabet to satisfy $\vert\mathcal M\vert = d<|\mathcal X|$. For fixed input and output alphabets, the set of classical distributions compatible with this restriction forms a convex polytope.

When generalizing to a quantum scenario, we restrict our attention to the case of quantum messages  while maintaining the classical shared resource and other causal constraints intact. More precisely, we are interested in the case where for each input $x$, the preparation device emits a quantum state $\rho_x\in\mathcal D(\mathbb C^d)$, while for each measurement choice $y$, the measurement device implements a POVM $\{M_{b\vert y}\}_{b\in\mathcal B}$. The resulting conditional probabilities are given by the Born rule 
\begin{equation}
    p(b\vert x,y)=\tr\big(\rho_x M_{b\vert y}\big).
\end{equation}
In what follows, we generalize this scenario to a multi-receiver case.

\begin{figure}[t!]
\centering
\includegraphics{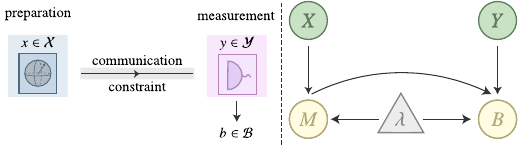}
\caption{\textbf{Prepare-and-measure scenario.}
Left: in each run, the preparation device receives an input $x\in\mathcal X$ and sends a physical system to the measurement device, subject to a communication constraint. The measurement device receives an input $y\in\mathcal Y$ and produces an output $b\in\mathcal B$. Right: DAG representation of the corresponding causal structure. The influence of the preparation choice $X$ on the outcome $B$ is mediated by the system sent by $M$, while the measurement choice $Y$ acts locally on the measurement device. The latent variable $\lambda$ represents shared randomness between the devices and is assumed to be independent of the freely chosen inputs $X$ and $Y$. The observed behavior is described by $p(b\vert x,y)$.}
\label{fig:PAM_scenario}
\end{figure}

\section{The prepare and broadcast scenario}
\label{sec:pab}

The prepare and broadcast (PAB) scenario extends the standard prepare-and-measure setting by replacing the single receiver with several receivers. While an $n$-receiver generalization is theoretically natural, the exact characterization of the corresponding correlation sets quickly becomes computationally intractable as the number of parties increases. Consequently, in this work, we focus on the simplest nontrivial case of two receivers, Bob and Charlie. Alice prepares a system that is sent to a broadcasting device. The broadcasting device then distributes the information to Bob and Charlie, who choose their inputs locally, $(y,z)$, and produce outputs $(b,c)$.

A simple way to picture the broadcasting channel is as a physical device with one input port and two output ports. It receives the system prepared by Alice and converts it into two systems, one delivered to Bob and one to Charlie. In a classical implementation, this could be a router, transmitter, or copying device that sends the same message, or noisy versions of it, to two different users. In a quantum implementation, however, the device should not be understood as a perfect copier of an arbitrary unknown state. Rather, it may be an interaction with an ancillary system, a beam splitter, an approximate cloning transformation, or more generally any physical process that redistributes the information contained in Alice’s input into the two output systems. Bob and Charlie then access only their local shares, and the experiment probes how Alice’s preparation is encoded jointly in their individual outcomes and in the correlations between them.

We use the labels $\mathcal{C}$, $\mathcal{Q}$, and $\mathcal{NS}$ to specify whether a resource is classical, quantum, or nonsignalling. In the PAB scenario, two labels are needed. The first one refers to the type of system sent by Alice to the broadcaster, while the second one refers to the type of system output by the broadcaster to generate correlations between Bob and Charlie. Thus, $\mathcal{CC}$ denotes a classical message followed by classical broadcasting, $\mathcal{CQ}$ denotes a classical message followed by a quantum broadcaster, $\mathcal{QC}$ denotes a quantum message followed by a classical broadcaster, and $\mathcal{CNS}$ denotes a classical message followed by a nonsignalling broadcaster. The fully quantum model is denoted by $\mathcal{QQ}$.

In the fully classical model, in each run Alice receives an input $x\in\{0,\ldots,n_A-1\}$ and sends a classical message $m\in\{0,\ldots,d-1\}$. Bob chooses an input $y\in\{0,\ldots,n_B-1\}$ and outputs $b\in\{0,\ldots,o_B-1\}$, while Charlie chooses an input $z\in\{0,\ldots,n_C-1\}$ and outputs $c\in\{0,\ldots,o_C-1\}$. In general, we allow for shared randomness $\lambda$ between the preparation, the broadcaster and the two measurement devices.  The causal structure is shown in Fig.~\ref{fig:broadcast}. A PAB behavior $p(b,c\vert x,y,z)$ is compatible with the fully classical broadcasting model $\mathcal{CC}$ if it can be written as
\begin{align} 
p_{\mathcal{CC}}(b,c\vert x,y,z)=\sum_{m,t,\lambda}p(\lambda)&p(m\vert x,\lambda)p(t\vert m,\lambda) \: \times \nonumber \\ &\times p(b\vert t,y,\lambda)p(c\vert t,z,\lambda).
\label{eq:broadcast-LHV} 
\end{align} 
Here $p(\lambda)$ is the hidden-variable distribution, $p(m\vert x,\lambda)$ is Alice's preparation map, $p(t\vert m,\lambda)$ is the classical broadcasting map, and $p(b\vert t,y,\lambda)$ and $p(c\vert t,z,\lambda)$ are Bob's and Charlie's local response functions. The message $m$ has bounded cardinality  $\vert\mathcal M\vert = d<\vert\mathcal X\vert$. In contrast, in the model above we do not impose a bound on the cardinality of the broadcasted variable $t$. If one wants to study restricted broadcasting, then one should instead introduce separate broadcasted variables $t_B$ and $t_C$, with their own cardinality constraints, for Bob and Charlie.

\begin{figure}[t!]
\centering
\includegraphics{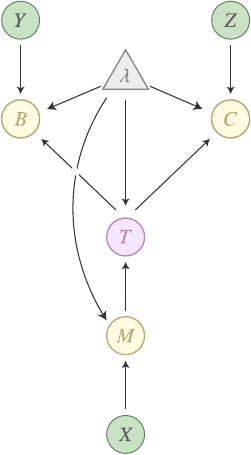}
\caption{\textbf{Prepare and broadcast DAG.} Alice receives an input $x$ and prepares a bounded classical message $m$. The message is processed by the broadcasting map, which produces the classical variable $t$ distributed to Bob and Charlie. Bob receives input $y$ and outputs $b$, while Charlie receives input $z$ and outputs $c$. The hidden variable $\lambda$ may influence Alice's preparation, the broadcasting map, and the local response functions of Bob and Charlie. The observed behavior is the conditional distribution $p(b,c\vert x,y,z)$.}
\label{fig:broadcast}
\end{figure}

For the unrestricted case considered here, the variable $t$ can be removed without loss of generality. Specifically, after absorbing all local randomness into $\lambda$, the maps can be taken to be deterministic. The broadcasting map is then a deterministic function $t=g_\lambda(m)$. Since Bob and Charlie also have access to $\lambda$, they can compute $g_\lambda(m)$ locally. Thus the function $g_\lambda$ can be absorbed into their response functions. We can thus use the simpler deterministic form 
\begin{equation} 
\!\!p_{\mathcal{CC}}(b,c\vert x,y,z)=\sum_{m,\lambda}p(\lambda)D^A_{\lambda}(m\vert x)D^B_{\lambda}(b\vert y,m)D^C_{\lambda}(c\vert z,m), \label{eq:broadcast-LHV-Det-simplified} 
\end{equation} 
where $D^A_\lambda$, $D^B_\lambda$, and $D^C_\lambda$ are deterministic response functions. This form is equivalent to Alice sending the dimension-bounded classical message $m$ directly to both Bob and Charlie.

For fixed finite alphabets, Eq.~\eqref{eq:broadcast-LHV-Det-simplified} defines a polytope. Its facets give PAB inequalities of the form \begin{equation}
\sum_{b,c,x,y,z}\alpha_{b,c,x,y,z}\,p(b,c\vert x,y,z)\le\beta_{\mathcal{CC}}, 
\label{eq:classicalinequalities} 
\end{equation}
where $\alpha_{b,c,x,y,z}$ are real coefficients and $\beta_{\mathcal{CC}}$ is the maximum value over the fully classical model. In Sec.~\ref{sec:inequalities}, we obtain such inequalities using standard facet-enumeration methods. The PAB scenario contains both PAM and Bell scenarios as special cases. If Alice has only one input, or if $x$ is fixed, the remaining behavior of Bob and Charlie is a standard bipartite Bell scenario. On the other hand, by marginalizing over Charlie's output and ignoring the corresponding input, one obtains
\begin{equation} 
p(b\vert x,y)=\sum_c p(b,c\vert x,y,z), 
\end{equation} 
which is independent of $z$ in the models considered here. This gives a standard PAM scenario between Alice and Bob. 

This observation is useful, but it also creates a possible ambiguity. Some valid inequalities for the classical PAB model are simply lifted Bell inequalities for Bob and Charlie. A violation of such an inequality shows that the broadcasting part is nonclassical, but it does not by itself show that Alice's preparation is nonclassical. For example, a quantum broadcaster could ignore Alice's input and distribute an entangled state to Bob and Charlie. As a result, if the goal is to certify nonclassicality of Alice's preparation or message, the relevant witness should remain valid even when the broadcaster is allowed to use nonclassical resources. This motivates hybrid models.

First, consider the case in which Alice's message is still classical, but the broadcaster is allowed to distribute a quantum state to Bob and Charlie. The resulting $\mathcal{CQ}$ model is
\begin{align} 
p_{\mathcal{CQ}}(b,c\vert x,y,z)=\sum_{m,\lambda}p(\lambda)p(m\vert x,\lambda)p^{Q}(b,c\vert y,z,m,\lambda), 
\label{eq:broadcast-LHV-Q} 
\end{align} 
where, for each fixed $(m,\lambda)$, 
\begin{equation}
p^{Q}(b,c\vert y,z,m,\lambda)=\tr\!\left[\rho^{BC}_{m,\lambda}\left(M^B_{b\vert y,\lambda}\otimes M^C_{c\vert z,\lambda}\right)\right]. \label{eq:CQ_block} \end{equation} 
Here $\rho^{BC}_{m,\lambda}$ is a bipartite quantum state prepared by the broadcaster, depending on the classical message $m$ and the hidden variable $\lambda$. For a given inequality with coefficients $\alpha_{b,c,x,y,z}$, the relevant bound for this model is $\beta_{\mathcal{CQ}}$, the maximum over all behaviors of the form given in Eq.~\eqref{eq:broadcast-LHV-Q}. A violation above $\beta_{\mathcal{CQ}}$ certifies that the message sent by Alice cannot be classical, even if the broadcaster is allowed to distribute quantum correlations. As discussed in Sec.~\ref{subsec:cqineqs}, this bound can be upper-bounded using an adapted Navascués--Pironio--Acín hierarchy~\cite{navascues2007bounding}.

The fully quantum model $\mathcal{QQ}$ replaces Alice's classical message by a quantum state. In its most general form, allowing shared randomness, the correlations can be written as 
\begin{equation}
\!\!p_{\mathcal{QQ}}(b,c\vert x,y,z)=\sum_{\lambda}p(\lambda)\tr\!\left[\mathcal T_\lambda(\rho_{x,\lambda})\left(M^B_{b\vert y,\lambda}\otimes M^C_{c\vert z,\lambda}\right)\right], \label{eq:born_broadcasting} 
\end{equation} 
where $\rho_{x,\lambda}$ is a state on a $d$-dimensional Hilbert space $\mathcal H_M$, and $\mathcal T_\lambda:\mathcal L(\mathcal H_M)\to \mathcal L(\mathcal H_B\otimes\mathcal H_C)$ is a quantum broadcasting channel. The output spaces $\mathcal H_B$ and $\mathcal H_C$ are not assumed to have bounded dimension. By Stinespring dilation, each channel $\mathcal T_\lambda$ can be represented by an isometry into a larger Hilbert space, followed by tracing out an environment. In Sec.~\ref{subsec:qqineqs}, we use a Navascués--Vértési-type hierarchy \cite{navascues2015bounding,navascues2015characterizing} to upper-bound the maximal violation achievable in this model.

One can also consider a stronger relaxation of the classical-message model, in which the broadcaster is allowed to distribute any nonsignalling resource to Bob and Charlie. This gives the $\mathcal{CNS}$ model, 
\begin{align} 
\!\!p_{\mathcal{CNS}}(b,c\vert x,y,z)=\sum_{m,\lambda}p(\lambda)p(m\vert x,\lambda)p^{\mathcal{NS}}_{BC}(b,c\vert y,z,m,\lambda). \label{eq:broadcast-LHV-NS} \end{align} 
For each fixed $(m,\lambda)$, the distribution $p^{\mathcal{NS}}_{BC}$ must be nonsignalling. That is, 
\begin{align}
\sum_b p^{\mathcal{NS}}_{BC}(b,c\vert y,z,m,\lambda)&=\sum_b p^{\mathcal{NS}}_{BC}(b,c\vert y',z,m,\lambda), \label{eq:NS1}\\ \sum_c p^{\mathcal{NS}}_{BC}(b,c\vert y,z,m,\lambda)&=\sum_c p^{\mathcal{NS}}_{BC}(b,c\vert y,z',m,\lambda), \label{eq:NS2}
\end{align} 
for all allowed values of the indices. In addition, 
\begin{align} p^{\mathcal{NS}}_{BC}(b,c\vert y,z,m,\lambda)&\ge 0, \label{eq:NSpos}\\ \sum_{b,c}p^{\mathcal{NS}}_{BC}(b,c\vert y,z,m,\lambda)&=1. \label{eq:NSnorm} 
\end{align} 
This model allows Bob and Charlie to share arbitrary nonsignalling correlations conditioned on the classical message $m$, but it still forbids direct communication between them. The set defined by Eq.~\eqref{eq:broadcast-LHV-NS} has the advantage of being a polytope for a finite alphabet. In Sec.~\ref{subsec:cnsineqs}, we use this polytope to derive inequalities that are valid even under this nonsignalling relaxation.

Finally, one may consider the $\mathcal{QC}$ model, where Alice sends a quantum system but the broadcaster is restricted to classical random variables. In this case, the $\mathcal{QC}$ and $\mathcal{CC}$ sets coincide. To see this, suppose the broadcaster measures Alice's $d$-dimensional quantum system and obtains a classical outcome $t$. The remaining part of the protocol is just classical post-processing of $t$ into Bob's and Charlie's outputs. Thus the only nontrivial object is the conditional distribution $p(t\vert x)$ generated by a single measurement on $d$-dimensional quantum states. By the Frenkel--Weiner theorem~\cite{frenkel2015classical}, any such single-measurement prepare-and-measure behavior can be simulated by a classical message with $d$ possible values, assisted by shared randomness. Hence, every $\mathcal{QC}$ behavior admits a $\mathcal{CC}$ model. The converse inclusion is immediate, since a classical $d$-valued message can be encoded into $d$ orthogonal quantum states and then measured by the classical broadcaster. Therefore, whenever the broadcasting device is classical, allowing Alice to send quantum states gives no advantage over classical communication alone.

\section{Characterizing the set of correlations in the prepare and broadcast scenario}
\label{sec:inequalities}
We now describe how we characterize several sets of correlations in the PAB scenario. There is a natural hierarchy between some of these sets. Since classical correlations are a subset of quantum correlations, and quantum correlations are a subset of nonsignalling correlations, one has
\begin{equation}
\mathcal{CC}=\mathcal{QC} \subsetneq \mathcal{CQ} \subsetneq \mathcal{CNS}.
\end{equation}
The inclusions are strict. This can already be seen by fixing Alice's input, so that the remaining behavior is an ordinary Bell scenario between Bob and Charlie. In that case, the CHSH bounds satisfy
\begin{align}
\beta_{\mathcal C}^{\mathrm{CHSH}}&=2,\\
\beta_{\mathcal Q}^{\mathrm{CHSH}}&=2\sqrt{2},\\
\beta_{\mathcal{NS}}^{\mathrm{CHSH}}&=4.
\end{align}
Thus the three possible resources for the $BC$ block lead to different sets of correlations. There is also the chain
\begin{equation}
\mathcal{CC}=\mathcal{QC} \subsetneq \mathcal{CQ} \subsetneq \mathcal{QQ}.
\end{equation}
The inclusion $\mathcal{CQ}\subsetneq\mathcal{QQ}$ follows because a classical message can be encoded into orthogonal quantum states, and the quantum broadcasting channel can then prepare the corresponding bipartite state for Bob and Charlie. The inclusion is strict because, after marginalizing over one receiver, the $\mathcal{QQ}$ model reduces to an ordinary quantum prepare-and-measure scenario, while the $\mathcal{CQ}$ model reduces to a classical $d$-message PAM model. Quantum dimension witnesses therefore separate the two sets.

The sets $\mathcal{QQ}$ and $\mathcal{CNS}$ are not ordered by inclusion. For example, $\mathcal{CNS}$ contains post-quantum nonsignalling correlations for the $BC$ block, such as PR-box correlations, which cannot be reproduced by $\mathcal{QQ}$. Conversely, as will be shown in Sec. \ref{subsec:qqineqs}, $\mathcal{QQ}$ contains quantum PAM correlations that cannot be reproduced by a classical message, even if the $BC$ block is allowed to be nonsignalling. A schematic representation of these relations is shown in Fig.~\ref{fig:sets}.

\begin{figure}[t]
\centering
\includegraphics{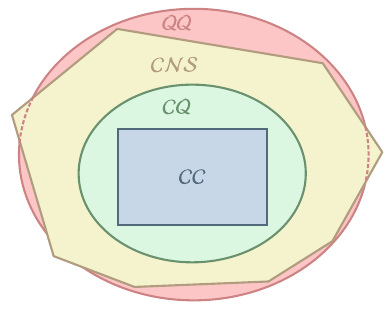}
\caption{\textbf{Correlation sets in the prepare and broadcast scenario.} The fully classical set $\mathcal{CC}$ is contained in the classical--quantum set $\mathcal{CQ}$. The set $\mathcal{CQ}$ is contained in both the nonsignalling relaxation $\mathcal{CNS}$ and the fully quantum set $\mathcal{QQ}$. The sets $\mathcal{CNS}$ and $\mathcal{QQ}$ are incomparable. The figure is schematic and not meant to represent the geometry of the sets.}
\label{fig:sets}
\end{figure}

In the following we focus on binary outputs. All models considered here are nonsignalling between Bob and Charlie for each fixed preparation input $x$. It is therefore convenient to describe the behavior $p(b,c\vert x,y,z)$ in terms of expectation values. We define
\begin{align}
\label{eq20}
\langle B_y\rangle^{[x]}
&:=
\sum_{b,c\in\{0,1\}}(-1)^b\,p(b,c\vert x,y,z),
\qquad \text{for any }z,
\nonumber\\
\langle C_z\rangle^{[x]}
&:=
\sum_{b,c\in\{0,1\}}(-1)^c\,p(b,c\vert x,y,z),
\qquad \text{for any }y,
\nonumber\\
\langle B_y C_z\rangle^{[x]}
&:=
\sum_{b,c\in\{0,1\}}(-1)^{b+c}\,
p(b,c\vert x,y,z).
\end{align}
In this representation, normalization and nonsignalling conditions are automatically satisfied, and the full conditional distribution can be reconstructed as
\begin{align}
p(b,c\vert x,y,z) &= \frac14\Big[ 1 +(-1)^b\langle B_y\rangle^{[x]} +(-1)^c\langle C_z\rangle^{[x]} \nonumber\\ &\qquad \hspace{0.3cm}+(-1)^{b+c}\langle B_yC_z\rangle^{[x]} \Big]. 
\label{eq:NS-correlator}
\end{align}
This parametrization automatically enforces normalization and nonsignalling. The remaining nontrivial condition is positivity of all probabilities in Eq.~\eqref{eq:NS-correlator}.

As mentioned before,  the fully classical set $\mathcal{CC}$ and the classical-nonsignalling set $\mathcal{CNS}$ are polytopes for finite alphabets and thus we can characterize them through their $H$-representation. We do this by standard tools of convex geometry. First, the $V$-representation is obtained by enumerating extremal strategies. In the fully classical case these are deterministic preparation and response functions. In the classical-nonsignalling case they are deterministic encodings on Alice's side, combined with extremal nonsignalling blocks for Bob and Charlie. After removing duplicate points, we use \texttt{lrs}~\cite{lrs} to compute the corresponding $H$-representations. The resulting facets define prepare-and-broadcast inequalities.

\subsection{PAB inequalities for the fully classical broadcasting model}
\label{subsec:ccineqs}

As discussed above, because the broadcasted variable $t$ is unbounded and unobserved, the deterministic broadcasting map can be absorbed directly into Bob's and Charlie's response functions. Each classical extremal strategy is then completely defined by a set of deterministic maps $m = f_\lambda(x)$, $b = h^B_\lambda(y,m)$ and $c = h^C_\lambda(z,m)$. The associated extremal behavior is simply:
 \begin{equation}
p_\lambda(b,c\vert x,y,z) = \delta_{b,\,h^B_\lambda(y,f_\lambda(x))}\,\delta_{c,\,h^C_\lambda(z,f_\lambda(x))}.
\label{eq:classical_vertex}
\end{equation} 
The fully classical set, in its V-representation, is defined as the convex hull of the vertices $\vec p_\lambda$, defined by the entries $(\vec {p}_\lambda)_{bcxyz} = p_\lambda(b,c\vert x,y,z)$. 

We can equivalently define  the $\mathcal {CC}$ polytope in the space of correlators. For binary outputs, following Eq.~\eqref{eq20}, the single-operator deterministic expectation values are,
\begin{align}
\langle B_y\rangle^{[x],\lambda} &= (-1)^{h^B_\lambda(y,f_\lambda(x))},\\
\langle C_z\rangle^{[x],\lambda} &= (-1)^{h^C_\lambda(z,f_\lambda(x))},
\end{align}
while the two-body deterministic correlators are given by
\begin{equation}
\langle B_y C_z\rangle^{[x],\lambda}= \langle B_y\rangle^{[x]}\langle C_z\rangle^{[x]} = (-1)^{h^B_\lambda(y,f_\lambda(x))+h^C_\lambda(z,f_\lambda(x))}.
\label{eq:product_structure}
\end{equation}
Therefore, the corresponding V-representation is given by the convex hull of the vertices $\vec c_\lambda$, defined by the entries $(\vec {c}_\lambda)_{ijxyz} = \langle B^i_y C^j_z\rangle^{[x],\lambda}$, where $i,j=0,1$ and $B^0=C^0 = \iden $, $B^1=B, C^1=C$.

We first consider the simplest scenario with three binary preparations and two dichotomic measurements. The deterministic encodings $f_\lambda:\mathcal X\to\{0,1\}$ give $2^3=8$ choices. For each receiver, there are $2^4=16$ deterministic response functions, since the response depends on one binary setting and one binary message. This gives $2^3\times 2^4\times 2^4=2048$ deterministic strategies, each defining an extremal behaviour of the form given in Eq.~\eqref{eq:classical_vertex}. Some of them may be repeated. After removing duplicates, the full expectation-value representation, including both one-body and two-body correlators, gives a polytope in $\mathbb R^{24}$ with $736$ distinct vertices. The full facet enumeration is already computationally demanding in this minimal PAB scenario. A partial list of full-space inequalities is available in Ref.~\cite{sarubi2025pab}. We therefore focus on two specific projections of the polytope that are particularly useful and for which full facet descriptions can be obtained.

The first projection retains only the two-body expectation values. For each preparation $x\in\{0,1,2\}$, we keep the four correlators $\langle B_y C_z\rangle^{[x]}$ with $y,z\in\{0,1\}$. This gives a projected polytope in $\mathbb R^{12}$ with $N_{\rm 2body}=176$ distinct vertices. Converting this $V$-representation into an $H$-representation gives $8400$ facet inequalities. These facets group into $42$ equivalence classes under preparation relabelings, input relabelings, exchange of Bob and Charlie, and output relabelings. Two representative inequalities are
\begin{align}
W_{\mathcal{CC}}^{(1)} ={}&3\langle B_0C_0\rangle^{[0]}+\langle B_0C_1\rangle^{[0]}+\langle B_1C_0\rangle^{[0]}
-\langle B_1C_1\rangle^{[0]}\nonumber\\
&-2\langle B_0C_1\rangle^{[1]}-2\langle B_1C_0\rangle^{[1]} -3\langle B_0C_0\rangle^{[2]}
+\langle B_0C_1\rangle^{[2]}
\nonumber\\
&+\langle B_1C_0\rangle^{[2]}+\langle B_1C_1\rangle^{[2]}\le 8,
\label{eq:CC}
\end{align}
and
\begin{align}
W_{\mathcal{CC}}^{(2)} = {}& 2\langle B_0C_0\rangle^{[0]}+2\langle B_1C_1\rangle^{[0]} +2\langle B_0C_1\rangle^{[1]}
\nonumber\\
& -2\langle B_1C_0\rangle^{[1]}-\langle B_0C_0\rangle^{[2]} -\langle B_0C_1\rangle^{[2]}
\nonumber\\
&+\langle B_1C_0\rangle^{[2]} -\langle B_1C_1\rangle^{[2]} \le 6.
\label{eq:CC2}
\end{align}
The second projection retains only the single-body correlators. For each preparation $x\in\{0,1,2\}$, we keep the $2$ expectation values of Bob $\langle B_y\rangle^{[x]}$ and the $2$ of Charlie $\langle C_z\rangle^{[x]}$, with  $y,z\in\{0,1\}$. This gives a projected polytope in $\mathbb R^{12}$. The projected vertex set contains $N_{\rm marg}=736$ distinct vertices, and its $H$-representation contains $552$ facets. Under the full relabeling group, these facets reduce to $5$ equivalence classes, $4$ of which are nontrivial. Representatives of the nontrivial classes are
\begin{align}
W_{\mathrm{marg}}^{(1)}
={}&-\langle B_1\rangle^{[0]}+\langle C_0\rangle^{[0]}+\langle C_1\rangle^{[0]}
\nonumber\\
&+\langle B_1\rangle^{[1]}-\langle C_0\rangle^{[1]}+\langle C_1\rangle^{[1]}
\nonumber\\
&+\langle B_1\rangle^{[2]}+\langle C_0\rangle^{[2]}-\langle C_1\rangle^{[2]}\le 5,
\label{eq:Wmarg1}
\\[0.6em]
W_{\mathrm{marg}}^{(2)}
={}&\langle B_0\rangle^{[0]}-\langle B_1\rangle^{[0]}-\langle B_0\rangle^{[1]}
\nonumber\\
&+\langle C_0\rangle^{[1]}+\langle B_1\rangle^{[2]}-\langle C_0\rangle^{[2]}\le 4,
\label{eq:Wmarg2}
\\[0.6em]
W_{\mathrm{marg}}^{(3)}
={}&\langle B_1\rangle^{[0]}+\langle C_1\rangle^{[0]}-\langle B_1\rangle^{[1]}
\nonumber\\
&+\langle C_1\rangle^{[1]}-\langle C_1\rangle^{[2]}\le 3,
\label{eq:Wmarg3}
\\[0.6em]
W_{\mathrm{marg}}^{(4)}
={}&\langle C_0\rangle^{[0]}+\langle C_1\rangle^{[0]}-\langle C_0\rangle^{[1]}
\nonumber\\
&+\langle C_1\rangle^{[1]}-\langle C_1\rangle^{[2]}\le 3.
\label{eq:Wmarg4}
\end{align}
The full list of marginal representatives is given in Ref.~\cite{sarubi2025pab}. Some of these marginal inequalities are lifts of ordinary PAM witnesses. For instance, up to relabelings and output flips, $W_{\mathrm{marg}}^{(4)}$ is the usual three-preparation PAM witness written for Charlie's marginal behavior~\cite{gallego2010device}.

\subsection{PAB inequalities for the hybrid classical-NS broadcasting model}
\label{subsec:cnsineqs}

We now consider the model in which Alice sends a classical message, while the broadcaster is allowed to distribute nonsignalling correlations between Bob and Charlie. Since the preparation and message alphabets are finite, and since the nonsignalling $BC$ block defines a polytope, the classical--$\mathcal{NS}$ model also defines a polytope. Its generators are obtained by combining deterministic encodings on Alice's side with extremal nonsignalling blocks on the broadcasting side. We write these blocks in the correlator coordinates introduced in Eq.~\eqref{eq20}, which are the coordinates used in the facet enumeration.

For binary outputs, each nonsignalling block $p^{\mathcal{NS}}_{BC}$ can be parametrized in terms of correlators. For each fixed $(m,\lambda)$, we define
\begin{align}
\langle B_y\rangle^{(m,\lambda)}
&:=
\sum_{b\in\{0,1\}}(-1)^b\,
p_B(b\vert y,m,\lambda),
\nonumber\\
\langle C_z\rangle^{(m,\lambda)}
&:=
\sum_{c\in\{0,1\}}(-1)^c\,
p_C(c\vert z,m,\lambda),
\nonumber\\
\langle B_y C_z\rangle^{(m,\lambda)}
&:=
\sum_{b,c\in\{0,1\}}(-1)^{b+c}\,
p^{\mathcal{NS}}_{BC}(b,c\vert y,z,m,\lambda),
\label{eq:CNS_block_correlators}
\end{align}
where $p_B$ and $p_C$ are the corresponding nonsignalling marginals.~In this representation, normalization and nonsignalling are automatically satisfied. However, we must impose positivity of the probabilities
\begin{align}
p^{\mathcal{NS}}_{BC}(b,c\vert y,z,m,\lambda)
&=
\frac14\Big[
1
+(-1)^b\langle B_y\rangle^{(m,\lambda)} +(-1)^c\langle C_z\rangle^{(m,\lambda)}
\nonumber\\
&\qquad
+(-1)^{b+c}
\langle B_yC_z\rangle^{(m,\lambda)}
\Big].
\label{eq:NS-correlator-2}
\end{align}
Thus,  the only remaining nontrivial constraint is
\begin{align}
1 +s_b\langle B_y\rangle^{(m,\lambda)} +s_c\langle C_z\rangle^{(m,\lambda)}+s_bs_c \langle B_yC_z\rangle^{(m,\lambda)} \ge 0,
\label{eq:NS-positivity}
\end{align}
for all $s_b,s_c\in\{\pm1\}$ and for all allowed values of $y,z,m,\lambda$. In the case of binary variables, each input pair $(y,z)$ gives four such inequalities, for a total of $16$ constraints per message value. Therefore, for  $m\in\{0,1\}$, this gives $32$ inequalities altogether. This provides an $H$-representation of the conditional nonsignalling set, from which the extremal vertices can be obtained by standard vertex-enumeration methods.

To construct the full broadcasting polytope, we combine the conditional nonsignalling blocks with deterministic preparation strategies. At an extremal preparation strategy, Alice's map is $p(m\vert x,\lambda)=\delta_{m,f_\lambda(x)}$, where $f_\lambda:\mathcal X\to\mathcal M$ is a deterministic encoding. For each fixed $\lambda$, the input $x$ selects the message-labeled nonsignalling block with $m=f_\lambda(x)$. The resulting behavior is $p_{\lambda}(b,c\vert x,y,z) =p^{\mathcal{NS}}_{BC} \bigl(b,c\vert y,z,m=f_\lambda(x),\lambda\bigr)$. Equivalently, in correlator form,
\begin{align}
\langle B_y\rangle^{[x],\lambda}
&=
\langle B_y\rangle^{(f_\lambda(x),\lambda)},
\nonumber\\
\langle C_z\rangle^{[x],\lambda}
&=
\langle C_z\rangle^{(f_\lambda(x),\lambda)},
\nonumber\\
\langle B_yC_z\rangle^{[x],\lambda}
&=
\langle B_yC_z\rangle^{(f_\lambda(x),\lambda)}.
\end{align}
Hence, each generator of the classical--$\mathcal{NS}$ broadcasting polytope is specified by a deterministic encoding together with one extremal nonsignalling block for each message value.

For $\vert\mathcal X\vert=3$ and $\vert\mathcal M\vert=2$, there are $2^{\vert\mathcal X\vert}=8$ deterministic encodings. Each conditional nonsignalling block has $24$ extremal vertices, namely $16$ local vertices and $8$ nonlocal vertices. Combining one block for each message value therefore gives $8\times 24^2=4608$ candidate generators. After removing duplicates, the full correlator representation contains $1680$ distinct generating points in $\mathbb R^{24}$. As in the fully classical case, a complete full-space facet enumeration is computationally demanding. A partial list is available in Ref.~\cite{sarubi2025pab}. We therefore focus on projections of the polytope onto subspaces of interest.

Projecting this set onto the two-body correlator space gives a polytope in $\mathbb R^{12}$ with $736$ distinct vertices. Its $H$-representation contains $552$ facets, grouped into $5$ equivalence classes under the full relabeling group. A representative inequality is
\begin{align}
W_{\mathcal{CNS}}^{(3)}
={}&
\langle B_0 C_0\rangle^{[0]}
-\langle B_0 C_1\rangle^{[0]}
+\langle B_0 C_1\rangle^{[1]}
\nonumber\\
&-\langle B_1 C_1\rangle^{[1]}
-\langle B_0 C_0\rangle^{[2]}
+\langle B_1 C_1\rangle^{[2]}
\le 4 .
\label{eq:WNS3}
\end{align}
The full list of representatives is given in Ref.~\cite{sarubi2025pab}.

 Moving on to the marginal polytope,  projecting the $4608$ candidate classical--$\mathcal{NS}$ generators onto the $12$-dimensional marginal space gives $833$ distinct projected points before removing redundancies. The resulting $H$-representation contains the same marginal facets as in the fully classical case. In particular, up to relabelings, the nontrivial classes are represented by Eqs.~\eqref{eq:Wmarg1}--\eqref{eq:Wmarg4}. This is expected, since allowing the $BC$ block to be nonsignalling changes the possible joint correlations between Bob and Charlie, but it does not enlarge the set of local marginal behaviors generated from a bounded classical message.

\subsection{PAB inequalities for the classical-quantum broadcasting model}
\label{subsec:cqineqs}

We now consider bounds for the classical--quantum model in Eq.~\eqref{eq:broadcast-LHV-Q}. In this model Alice still sends a classical message, but the broadcaster is allowed to distribute a quantum state to Bob and Charlie. Given a linear PAB witness, the goal is to compute, or upper bound, its maximal value over this model. Since the witness is linear and shared randomness only convexifies the set, it is enough to optimize over deterministic encodings $f:\mathcal X\to\mathcal M$. For a fixed encoding $f$, all inputs $x$ mapped to the same message $m=f(x)$ must give the same bipartite quantum behavior. For example, if $\vert\mathcal X\vert=3$, $\vert\mathcal M\vert=2$, and $f(0)=0$, $f(1)=f(2)=1$, then the broadcaster cannot distinguish between the inputs $x=1$ and $x=2$. Thus, $p(b,c\vert x=1,y,z)=p(b,c\vert x=2,y,z)$ for all $b,c,y,z$ which  implies that all corresponding expectation values must also be equal, $\langle B_yC_z\rangle^{[1]}=\langle B_yC_z\rangle^{[2]}$  for all  $y,z$.

We focus on the two-body witnesses obtained in this work, written as 
\begin{equation}
W=\sum_{x,y,z} \alpha_{xyz} \langle B_yC_z\rangle^{[x]} .\label{eq:General_two_Corr_ineq}  
\end{equation}
For a fixed deterministic encoding $f$, the coefficients associated with the same message can be grouped as $A^{(m\vert f)}_{yz} = \sum_{x:\,f(x)=m} \alpha_{xyz}$. The contribution of the message $m$ is then the Bell-type expression $\sum_{y,z} A^{(m\vert f)}_{yz} \langle B_yC_z\rangle_m ,$ where $\langle B_yC_z\rangle_m$ denotes the correlator of the bipartite quantum state distributed by the broadcaster when the classical message is $m$.

For fixed coefficients $A=\{A_{yz}\}$, define the corresponding bipartite quantum value as
\begin{equation}
\beta_Q(A) = \max_{\rho_{BC},\{B_y\},\{C_z\}} \sum_{y,z} A_{yz}\, \tr\!\left[ \rho_{BC}(B_y\otimes C_z) \right],
\end{equation}
where $B_y$ and $C_z$ are dichotomic observables. Since Bob's and Charlie's output Hilbert spaces are not dimension-bounded, the different message blocks can be optimized independently. Indeed, one can realize independent optimal strategies by placing the states associated with different messages in orthogonal sectors and using block-diagonal measurements. Therefore, for a fixed encoding $f$,
\begin{equation}
\beta_{\mathcal{CQ}}(f) = \sum_{m\in\mathcal M} \beta_Q\!\left(A^{(m\vert f)}\right).
\end{equation}
Maximizing over all deterministic encodings gives
\begin{equation}
\beta_{\mathcal{CQ}} = \max_{f:\mathcal X\to\mathcal M} \sum_{m\in\mathcal M} \beta_Q\!\left(A^{(m\vert f)}\right).
\label{eq:CQ_bound_exact}
\end{equation}

The values $\beta_Q(A)$ are upper bounded using the standard Navascués-Pironio-Acín (NPA) hierarchy~\cite{navascues2007bounding}. For a given level of the hierarchy, the optimization reduces to a semidefinite program maximizing the objective function $\sum_{y,z} A_{yz}\, \Gamma_{B_y,C_z}$ over a moment matrix $\Gamma$, subject to the usual constraints of positivity, normalization, and commutation of Bob's and Charlie's observables. We denote this maximum value by $\beta_{\mathrm{NPA}}(A)$.

In the numerical results below, we use the NPA hierarchy at levels $\ell=1,2,3$. The values change only negligibly between consecutive levels for the witnesses considered here, so we present the level-$3$ bounds. The resulting upper bound on the classical--quantum value is
\begin{equation}
\beta_{\mathcal{CQ}} \le \max_{f:\mathcal X\to\mathcal M} \sum_{m\in\mathcal M} \beta_{\mathrm{NPA}}^{(3)} \!\left(A^{(m\vert f)}\right).
\label{eq:CQ_bound_NPA}
\end{equation}

Applying this procedure to the two fully classical witnesses in Eqs.~\eqref{eq:CC} and \eqref{eq:CC2}, we obtain
\begin{align}
&W_{\mathcal{CC}}^{(1)} \underset{\mathcal{CC}}{\leq} 8 \underset{\mathcal{CQ}}{\leq} 9.2376,\\
&W_{\mathcal{CC}}^{(2)} \underset{\mathcal{CC}}{\leq} 6 \underset{\mathcal{CQ}}{\leq} 8.4853 .
\label{eq:CQ_bound_CC}
\end{align}
These are NPA upper bounds on the corresponding $\mathcal{CQ}$ values.

\subsection{Analytic classical-message bounds for two-body witnesses}
\label{subsec:analytic-classical-message}

For two-body correlator witnesses of the form~\eqref{eq:General_two_Corr_ineq} the optimal value over any classical-message model admits a simple closed-form expression. 

Following the discussion in the previous section, let $R\in\{\mathcal C,\mathcal Q,\mathcal{NS}\}$ denote whether the message-conditioned $BC$ block is local, quantum, or nonsignalling, and let $\beta_R(A)$ be the maximum of $\sum_{y,z}A_{yz}\langle B_yC_z\rangle$ over the resource $R$. The exact classical-message bound is then
\begin{equation}
\beta_{\mathcal C R}(W_\alpha) = \max_{f:\mathcal X\to\mathcal M} \sum_{m\in\mathcal M} \beta_R\!\left(A^{(m\vert f)}\right).
\label{eq:classical_message_two_body_bound}
\end{equation}
The inequality $\beta_{\mathcal C R}(W_\alpha)\le\max_f\sum_m\beta_R(A^{(m\vert f)})$ is immediate, since once the encoding is fixed each message contributes through an independent $BC$ block, whose individual maximum is by definition $\beta_R(A^{(m\vert f)})$. The reverse inequality requires showing that the per-message optima can be attained simultaneously inside a single $BC$ realization. This is where the absence of dimension constraints on Bob and Charlie enters. Given optimal strategies $(\rho_m^{BC},\{B_y^{(m)}\},\{C_z^{(m)}\})$ for each message, place them on orthogonal sectors $\mathcal H_B=\bigoplus_m\mathcal H_B^{(m)}$, $\mathcal H_C=\bigoplus_m\mathcal H_C^{(m)}$, and define the block-diagonal observables $B_y=\bigoplus_m B_y^{(m)}$, $C_z=\bigoplus_m C_z^{(m)}$. The broadcaster then prepares $\rho_m^{BC}$ embedded in the $m$-th sector, and the resulting strategy attains $\sum_m\beta_R(A^{(m\vert f)})$. The same argument applies for $R=\mathcal{NS}$ because the nonsignalling polytope is closed under direct sums.

For the binary-input $BC$ blocks considered here, each $\beta_R$ admits a compact form,
\begin{align}
\beta_{\mathcal C}(A)
&=
\max_{s_y,t_z\in\{\pm1\}}
\sum_{y,z}A_{yz}\,s_y t_z,
\label{eq:betaC_block}\\
\beta_{\mathcal Q}(A)
&=
\max_{\{u_y\},\{v_z\}}
\sum_{y,z}A_{yz}\,u_y\cdot v_z,
\label{eq:betaQ_block}\\
\beta_{\mathcal{NS}}(A)
&=
\sum_{y,z}|A_{yz}|.
\label{eq:betaNS_block}
\end{align}
In Eq.~\eqref{eq:betaQ_block} the maximization is over real unit vectors. This is the standard Tsirelson vector form for bipartite correlator expressions and can be solved as a semidefinite program. For the $2\times2$ case,
\begin{equation}
A=\begin{pmatrix}a&b\\ c&d
\end{pmatrix},
\end{equation}
the quantum value reduces to a one-parameter optimization,
\begin{equation}
\beta_{\mathcal Q}(A)= \max_{t\in[-1,1]} \qty[ \sqrt{a^2+c^2+2ac\,t} + \sqrt{b^2+d^2+2bd\,t}],
\label{eq:betaQ_2by2_closed}
\end{equation}
obtained by optimizing over Charlie's vectors and setting $t=u_0\cdot u_1$.

Applying Eq.~\eqref{eq:classical_message_two_body_bound} to the witnesses used below gives the exact bounds listed in Table~\ref{tab:classical_message_bounds}. For $W_{\mathcal{CC}}^{(1)}$ in Eq.~\eqref{eq:CC}, the optimal encoding is $f(0)=f(1)=0$ and $f(2)=1$, which yields the effective coefficient matrices
\begin{equation}
A^{(0)}=\begin{pmatrix}3&-1\\ -1&-1\end{pmatrix},
\qquad
A^{(1)}=\begin{pmatrix}-3&1\\ 1&1\end{pmatrix}.
\end{equation}
Both reduce to the same one-parameter expression $\sqrt{10-6t}+\sqrt{2+2t}$ from Eq.~\eqref{eq:betaQ_2by2_closed}, whose maximum is attained at $t=-\tfrac13$ and equals $\tfrac{8}{\sqrt3}$, giving $\beta_{\mathcal{CQ}}(W_{\mathcal{CC}}^{(1)})=\tfrac{16}{\sqrt3}$. For $W_{\mathcal{CC}}^{(2)}$ in Eq.~\eqref{eq:CC2}, the optimal encoding instead groups $x=0,1$ together and isolates $x=2$, giving
\begin{equation}
A^{(0)}=\begin{pmatrix}2&2\\ -2&2\end{pmatrix},
\qquad
A^{(1)}=\begin{pmatrix}-1&-1\\ 1&-1\end{pmatrix},
\end{equation}
whose quantum correlator values are $4\sqrt2$ and $2\sqrt2$, so that $\beta_{\mathcal{CQ}}(W_{\mathcal{CC}}^{(2)})=6\sqrt2$.

\begin{table}[t]
    \centering
    \renewcommand{\arraystretch}{1.25}
    \caption{Exact classical-message bounds obtained from Eq.~\eqref{eq:classical_message_two_body_bound} for the two-body witnesses considered in this work. The three columns correspond to local, quantum, and nonsignalling resources in the message-conditioned $BC$ block.}
    \label{tab:classical_message_bounds}
    \begin{tabular*}{\columnwidth}{@{\extracolsep{\fill}}lccc@{}}
        \hline\hline
        Witness & $\beta_{\mathcal{CC}}$ & $\beta_{\mathcal{CQ}}$ & $\beta_{\mathcal{CNS}}$ \\
        \hline
        $W_{\mathcal{CC}}^{(1)}$ & $8$ & $16/\sqrt3\simeq9.2376$ & $12$ \\
        $W_{\mathcal{CC}}^{(2)}$ & $6$ & $6\sqrt2\simeq8.4853$ & $12$ \\
        $\mathcal W_{\mathcal{CNS}}^{(3)}$ & $4$ & $4$ & $4$ \\
        \hline\hline
    \end{tabular*}
\end{table}

\subsection{An SDP approximation for the quantum-quantum broadcasting model}
\label{subsec:qqineqs}

We now introduce an SDP relaxation to upper bound linear witnesses in the fully quantum broadcasting model. In this scenario we fix the dimension on Alice's quantum system, while we allow Bob and Charlie's Hilbert space to have arbitrary dimension. 
The method follows the moment-matrix idea behind the NPA hierarchy \cite{navascues2007bounding,navascues2008convergent}, combined with dimension-constrained noncommutative polynomial methods \cite{navascues2015bounding,navascues2015characterizing}. 

In the $\mathcal{QQ}$ model, Alice sends a $d$-dimensional quantum state through a broadcasting channel, as in Eq.~\eqref{eq:born_broadcasting}. Since the witnesses we consider are linear in each preparation and the set of states is convex, it is enough to consider pure states $\rho_x=\ket{\psi_x}\bra{\psi_x}$. By Stinespring dilation of the channel and Naimark dilation of the measurements~\cite{Stinespring1955,naimark1940spectral}, we write
\begin{equation}
p(b,c\vert x,y,z) = \bra{\psi_x} U^\dagger P^B_{b\vert y}P^C_{c\vert z}U \ket{\psi_x}.
    \label{eq:qq_stinespring_model}
\end{equation}
where $U:\mathbb C^d\to\mathcal K$ is an isometry satisfying $U^\dagger U=\iden_d$, and $P^B_{b\vert y}$ and $P^C_{c\vert z}$ are projectors acting on the dilation space $\mathcal K$. We relax the tensor product structure by imposing the commuting-operator relations $[P^B_{b\vert y},P^C_{c\vert z}]=0$. 

Now, let us choose an input basis $\{\ket{i}\}_{i=1}^d$, define $\ket{e_i}=U\ket{i}$, and write $\ket{\psi_x}=\sum_i\alpha_{x,i}\ket{i}$. Therefore the probability reads,
\begin{equation}
p(b,c\vert x,y,z)=\sum_{i,j=1}^d\alpha_{x,i}^*\alpha_{x,j}\bra{e_i}P^B_{b\vert y}P^C_{c\vert z}\ket{e_j}.
    \label{eq:qq_basis_expansion}
\end{equation}
Notice that the optimization  variables are  $\alpha_{x,i}^*\alpha_{x,j}$ and the moments  $\bra{e_i}u^\dagger v\ket{e_j}$ for $u$ and $v$ projectors. Then, since any witness would be quadratic on these variables, we relax the problem as follows. 

For binary outcomes we use the observables $B_y=P^B_{0\vert y}-P^B_{1\vert y},$ and $C_z=P^C_{0\vert z}-P^C_{1\vert z}$. The measurement algebra is generated by
\begin{equation}
B_y^2=C_z^2=\iden \quad \text{and} \quad [B_y,C_z]=0 .
    \label{eq:qq_binary_relations}
\end{equation}
We denote by $\mathcal S_t$ the set of reduced words in the letters $B_y$ and $C_z$ with length at most $t$. Reduction is always performed modulo Eq.~\eqref{eq:qq_binary_relations}. Thus, $B_yB_yC_z$ is reduced to $C_z$, and $C_zB_y$ is identified with $B_yC_z$. The parameter $t$ is the measurement-word level of the relaxation.

The variables  $\alpha_{x,j}$ are just complex numbers and therefore they commute. We denote by $\mathcal A_q$ the set of monomials in these variables with total degree at most $q$. Thus, $q$ is the preparation-word level of relaxation. The normalization of each preparation is encoded by
\begin{equation}
g_x(\alpha,\alpha^*):=\sum_{i=1}^d \alpha_{x,i}^*\alpha_{x,i}-1=0, \qquad \forall x .
\label{eq:qq_prep_normalization}
\end{equation}

At relaxation order $(q,t)$, the formal vectors kept in the moment matrix are $\mu u\ket{e_i}$, with $\mu\in\mathcal A_q$, $u\in\mathcal S_t$, and $i=1,\ldots,d$. Equivalently, for $\mu,\kappa\in\mathcal A_q$ and $u,v\in\mathcal S_t$, we introduce symbols $\eta^{ij}_{u,v}$ representing $\bra{e_i}u^\dagger v\ket{e_j}$ and a linear functional $L_{q,t}$ acting on moments of the form $L_{q,t}\!\left(\mu^*\kappa\,\eta^{ij}_{u,v}\right)$. For a  quantum realization, its action is defined as
\begin{equation}
    L_{q,t}\!\left(\mu^*\kappa\,\eta^{ij}_{u,v}\right)
    =\mu^*(\alpha,\alpha^*)\, \kappa(\alpha,\alpha^*)\bra{e_i}u^\dagger v\ket{e_j}.
    \label{eq:qq_true_functional}
\end{equation}
This makes Eq.~\eqref{eq:qq_basis_expansion} linear in the SDP variables, because the products between preparation amplitudes and projectors moments are treated as single lifted moments. 

The lifted moment matrix $\mathcal M$ is indexed by triples $(\mu,i,u)$, with $\mu\in\mathcal A_q$, $u\in\mathcal S_t$, and $i=1,\ldots,d$. Its entries are
\begin{equation}
    \mathcal M_{q,t}[(\mu,i,u),(\kappa,j,v)] =  L_{q,t}\!\left( \mu^*\kappa\,\eta^{ij}_{u,v} \right).
    \label{eq:qq_lifted_moment_matrix}
\end{equation}
By definition, for quantum realizations
\begin{equation}\label{eq:qq_lifted_psd}
\mathcal M_{q,t}\succeq0,    
\end{equation}
and the normalization follows from the condition $\langle e_i | e_j \rangle =\langle i |U^\dagger U| j \rangle =  \delta_{ij}$, resulting in 
\begin{equation}\label{eq:qq_frame_normalization}
 \mathcal M_{q,t}[(1, i, \iden), (1, j, \iden)]  = \delta_{ij}.
\end{equation}
Since the vectors $\{\ket{e_i }\}$ form an orthonormal basis, to ensure the SDP solver respects this geometry across all matrix blocks, we also impose the scaled orthonormality condition
\begin{equation}
    \mathcal M_{q,t}[(\mu, i, \iden), (\kappa, j, \iden)]  = \delta_{ij}  \mathcal  M_{q,t}[(\mu, 1, \iden), (\kappa, 1, \iden)] .
\end{equation}

The preparation constraints are imposed through localizing equations. Since $g_x$ has degree two, we multiply it only by monomials whose total degree still fits inside the order $2q$. Thus, for $\mu,\kappa\in\mathcal A_{q-1}$ and for all word pairs kept at level $t$, we impose
\begin{equation}\label{eq:qq_prep_localizing}
    L_{q,t}\!\left( \mu^*\kappa\,g_x\,\eta^{ij}_{u,v} \right) = 0
\end{equation}
At $q=1$, this already enforces the basic normalization $L_{q,t}(g_x\,\eta^{ij}_{u,v})=0$. Larger values of $q$ add the same normalization multiplied by higher-degree preparation monomials.

Since the preparation variables commute, each moment depends only on the commutative monomial $\mu^*\kappa$, not on the particular pair $(\mu,\kappa)$ used to generate it. Thus, if $\mu^*\kappa=r^*s$ as monomials in $\alpha_{x,i}$ and $\alpha_{x,i}^*$, then, for all $u,v\in\mathcal S_t$, we impose
\begin{equation}
L_{q,t}\!\left( \mu^*\kappa\,\eta^{ij}_{u,v} \right) = L_{q,t}\!\left( r^*s\,\eta^{ij}_{u,v} \right).
    \label{eq:qq_scalar_constraints}
\end{equation}
Equivalently, in the implementation, moment variables are indexed by the reduced scalar monomial $\mu^*\kappa$, rather than by the ordered pair $(\mu,\kappa)$.

The measurement constraints are imposed by word reduction. If $u^\dagger v$ and $r^\dagger s$ reduce to the same word modulo Eq.~\eqref{eq:qq_binary_relations}, then, for all $\mu,\kappa\in\mathcal A_q$, we require
\begin{equation}
L_{q,t}\!\left( \mu^*\kappa\,\eta^{ij}_{u,v} \right) = L_{q,t}\!\left( \mu^*\kappa\,\eta^{ij}_{r,s} \right).
    \label{eq:qq_word_constraints}
\end{equation}
For example, taking $u=B_y$, $v=B_yC_z$, $r=C_z$, and $s=\iden$, one has $u^\dagger v=B_y(B_yC_z)=C_z=r^\dagger s$. Hence the two formal moments $\eta^{ij}_{B_y,B_yC_z}$ and $\eta^{ij}_{C_z,\iden}$ must be identified. Similarly, $C_zB_y$ and $B_yC_z$ are identified by Bob--Charlie commutation.

For a general linear witness written in correlator form, we denote by $W_{q,t}(L_{q,t})$ the lifted expression obtained by replacing the correlators with their corresponding lifted moments:
\begin{align}
    W_{q,t}(L_{q,t}) &=  \sum_{x,y,i,j} \omega^B_{xy} L_{q,t} \left(\alpha_{x,i}^*\alpha_{x,j}\eta^{ij}_{\iden,B_y}\right)
    \nonumber\\
    &\quad + \sum_{x,z,i,j}\omega^C_{xz} L_{q,t} \left(\alpha_{x,i}^*\alpha_{x,j}\eta^{ij}_{\iden,C_z}\right)
    \nonumber\\
    &\quad + \sum_{x,y,z,i,j} \omega^{BC}_{xyz} L_{q,t} \left(\alpha_{x,i}^*\alpha_{x,j}
        \eta^{ij}_{B_y,C_z}\right).
    \label{eq:lifted_QQ_witness}
\end{align}
The order-$(q,t)$ relaxation is then
\begin{align}
    \beta_{\mathcal{QQ}}^{(q,t)} = \max_{L_{q,t}}\quad &  W_{q,t}(L_{q,t})  \nonumber\\
    \mathrm{s.t.}\quad
    &
    \text{Eqs.~\eqref{eq:qq_lifted_psd}, \eqref{eq:qq_frame_normalization},\eqref{eq:qq_prep_localizing},\eqref{eq:qq_scalar_constraints}, and \eqref{eq:qq_word_constraints}.}
    \label{eq:qq_sdp_relaxation}
\end{align}
Every physical $\mathcal{QQ}$ realization defines a feasible $L_{q,t}$ through Eq.~\eqref{eq:qq_true_functional}. Hence $\beta_{\mathcal{QQ}}^{(q,t)}$ is an upper bound on the true $\mathcal{QQ}$ value. Increasing $q$ strengthens the consistency constraints on the preparation amplitudes, while increasing $t$ adds longer word constraints for the measurement operators.

As a numerical check, we evaluated the SDP relaxation at the first two levels, $(q,t)=(1,1)$ and $(q,t)=(1,2)$. The results are collected in Table~\ref{tab:qq_witness_values}. For the two fully classical witnesses in Eqs.~\eqref{eq:CC} and \eqref{eq:CC2}, the SDP upper bounds are already nearly stable at the first level. The same is true for the witness in Eq.~\eqref{eq:WNS3}.

\begin{table}[t]
    \centering
    \renewcommand{\arraystretch}{1.3}
    \caption{SDP upper bounds returned by the lifted relaxation for the witnesses considered here. The second column gives the reference bound of the inequality. For the first two rows this is the fully classical bound, while for the last row it is the classical--$\mathcal{NS}$ bound.}
    \label{tab:qq_witness_values}
    \begin{tabular*}{\columnwidth}{@{\extracolsep{\fill}}cccc@{}}
        \hline\hline
        Witness & Reference bound & $(1,1)$ \& $(1,2)$-level & Conjectured $\mathcal{QQ}$ \\
        \hline
        $W_{\mathcal{CC}}^{(1)}$ & $8$ & $12.944272$ & $4+4\sqrt{5}$  \\
        $W_{\mathcal{CC}}^{(2)}$ & $6$ & $10.828427$ & $8+2\sqrt{2}$  \\
        $\mathcal W_{\mathcal{CNS}}^{(3)}$ & $4$ & $5.196152$ & $3\sqrt{3}$  \\
        \hline\hline
    \end{tabular*}
\end{table}

\section{Input-free measurements}
\label{sec:A NO-GO THEOREM}

The simplest prepare-and-broadcast configurations are those in which Bob and Charlie have no independent measurement inputs. In this regime, they each perform one fixed measurement, and the observed behavior is the conditional distribution $p(b,c\vert x)$. The main simplification is that the part of the experiment after Alice sends the system, namely the broadcasting channel followed by Bob's and Charlie's fixed measurements, can be represented by a single effective POVM acting on the $d$-dimensional system sent by Alice.

This observation connects the input-free PAB scenario directly to the Frenkel--Weiner theorem~\cite{frenkel2015classical}: the statistics of a single POVM on a $d$-dimensional quantum system can be simulated, with shared randomness, by a classical message with at most $d$ values. We use this to prove that, in the input-free regime and without pre-shared entanglement between Alice and the broadcasting device, quantum messages give no advantage over $d$-valued classical messages.

We then discuss the entanglement-assisted input-free case. Such entanglement lies outside the causal model of the collapse theorem and can produce violations. However, because there are no receiver inputs, the joint output $o=(b,c)$ can be regarded as a single four-valued outcome. Consequently, the entanglement-assisted violation in the minimal binary-output input-free PAB scenario is not a genuinely multipartite separation. It is the known four-output input-free PAM advantage, implemented in a distributed form.

\subsection{Input-free measurements with shared randomness}
\label{subsec:input-free-shared-randomness}

Throughout this section, the preparation alphabet $\mathcal X$ and the output alphabets $\mathcal B$ and $\mathcal C$ are finite. We denote by $\mathsf{\mathcal{CC}}^{\mathrm{if}}_d$ the input-free classical set, consisting of behaviors of the form
\begin{equation}
\label{eq:if-classical}
p(b,c\vert x) = \sum_{\lambda,m} p(\lambda)\,D^A_{\lambda}(m\vert x)\,D^B_{\lambda}(b\vert m)\,D^C_{\lambda}(c\vert m),
\end{equation}
where $|\mathcal M|\le d$, the response functions may be taken deterministic, and $\lambda$ is shared randomness independent of $x$.

We denote by $\mathsf{\mathcal{QQ}}^{\mathrm{if}}_d$ the convex hull of the quantum behaviors
\begin{equation}
\label{eq:if-quantum}
p(b,c\vert x) = \operatorname{tr}\!\left[\mathcal T(\rho_x)\left(M^B_b\otimes M^C_c\right)\right],
\end{equation}
where $\rho_x\in\mathcal D(\mathbb C^d)$, $\mathcal T:\mathcal L(\mathbb C^d)\to\mathcal L(\mathcal H_B\otimes\mathcal H_C)$ is an arbitrary quantum channel, and $\{M^B_b\}_b$ and $\{M^C_c\}_c$ are arbitrary finite-outcome POVMs. The dimensions of $\mathcal H_B$ and $\mathcal H_C$ are unrestricted. The convex hull accounts for shared randomness among the devices. The next theorem formalizes this effective-POVM reduction.

\begin{theorem}[Input-free collapse]
\label{thm:if-collapse}
For every finite preparation alphabet and finite output alphabets,
\begin{equation}
\label{eq:if-collapse}
\mathsf{\mathcal{QQ}}^{\mathrm{if}}_d = \mathsf{\mathcal{CC}}^{\mathrm{if}}_d.
\end{equation}
Equivalently, every input-free quantum PAB behavior with a $d$-dimensional input system admits a classical simulation using one common message of cardinality at most $d$, assisted by shared randomness.
\end{theorem}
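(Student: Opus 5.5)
We prove the two inclusions separately. The inclusion $\mathsf{\mathcal{CC}}^{\mathrm{if}}_d\subseteq\mathsf{\mathcal{QQ}}^{\mathrm{if}}_d$ is the easy one. Fix a deterministic strategy $\lambda$ in Eq.~\eqref{eq:if-classical}, with encoding $m=f_\lambda(x)$ and responses $b=h^B_\lambda(m)$, $c=h^C_\lambda(m)$. Alice prepares $\rho_x=\ket{f_\lambda(x)}\bra{f_\lambda(x)}$ in an orthonormal basis of $\mathbb C^d$. The broadcaster applies the measure-and-prepare channel
\begin{equation}
\mathcal T(\rho)=\sum_m \bra{m}\rho\ket{m}\,\ket{h^B_\lambda(m)}\bra{h^B_\lambda(m)}\otimes\ket{h^C_\lambda(m)}\bra{h^C_\lambda(m)},
\end{equation}
and Bob and Charlie measure in the computational basis. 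This reproduces the deterministic vertex exactly. Since $\mathsf{\mathcal{QQ}}^{\mathrm{if}}_d$ is defined as a convex hull, mixing over $\lambda$ with weights $p(\lambda)$ stays inside it.

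For the reverse inclusion, $\mathsf{\mathcal{QQ}}^{\mathrm{if}}_d$ is the convex hull of behaviors of the form~\eqref{eq:if-quantum}, and $\mathsf{\mathcal{CC}}^{\mathrm{if}}_d$ is convex because shared randomness can be concatenated. It therefore suffices to show that a single unmixed quantum behavior lies in $\mathsf{\mathcal{CC}}^{\mathrm{if}}_d$. Fix $\mathcal T$, $\{M^B_b\}$ and $\{M^C_c\}$, and pass to the Heisenberg picture through the adjoint channel $\mathcal T^\dagger$. Define
\begin{equation}
E_{b,c}:=\mathcal T^\dagger\!\left(M^B_b\otimes M^C_c\right).
\end{equation}
Because $\mathcal T^\dagger$ is completely positive and unital, each $E_{b,c}\ge0$ and $\sum_{b,c}E_{b,c}=\mathcal T^\dagger(\iden)=\iden_d$. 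So $\{E_{b,c}\}$ is a single POVM on $\mathbb C^d$ with the finite outcome set $\mathcal B\times\mathcal C$, and $p(b,c\vert x)=\tr(\rho_x E_{b,c})$. For infinite-dimensional $\mathcal H_B$ and $\mathcal H_C$, the adjoint is a normal unital CP map, and the same conclusion holds for finitely many effects.

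This is exactly a single-measurement prepare-and-measure behavior with outcome $o=(b,c)$. We then invoke the Frenkel--Weiner theorem~\cite{frenkel2015classical}. It gives a distribution $p(\lambda)$, encodings $p(m\vert x,\lambda)$ with $|\mathcal M|\le d$, and decoders $p(o\vert m,\lambda)$ such that
\begin{equation}
p(b,c\vert x)=\sum_{\lambda,m}p(\lambda)\,p(m\vert x,\lambda)\,p(b,c\vert m,\lambda).
\end{equation}
Frenkel--Weiner allows $\lambda$ to be independent of $x$ and the measurement to have arbitrarily many finite outcomes, which matches our setting.

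The main obstacle is that the decoder $p(b,c\vert m,\lambda)$ is a joint distribution, whereas Eq.~\eqref{eq:if-classical} requires Bob and Charlie to respond with separate local functions. We resolve this by absorbing the decoder's local randomness into the shared variable. Write $p(o\vert m,\lambda)=\sum_{\mu}p(\mu)\,\delta_{o,g_{\lambda,\mu}(m)}$, where $\mu$ is independent of $x$ and $m$; such a decomposition exists because a stochastic matrix is a mixture of deterministic ones. Setting $\lambda'=(\lambda,\mu)$ makes the decoder deterministic, $(b,c)=g_{\lambda'}(m)$. Both receivers know $\lambda'$ and receive the same message $m$, so Bob outputs the first component of $g_{\lambda'}(m)$ and Charlie the second. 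The encoding is also made deterministic in the standard way by folding its randomness into $\lambda'$.

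This yields exactly the form~\eqref{eq:if-classical} with one common message of cardinality at most $d$, which proves the reverse inclusion and hence Eq.~\eqref{eq:if-collapse}. Nothing in the argument relies on the binary-output restriction or on the dimensions of $\mathcal H_B$ and $\mathcal H_C$.
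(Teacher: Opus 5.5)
Your proposal is correct and follows the paper's proof essentially step for step. You reduce to a single realization by convexity, form the effective POVM $E_{bc}=\mathcal T^*(M^B_b\otimes M^C_c)$ on $\mathbb C^d$, apply the Frenkel--Weiner theorem to the joint outcome $(b,c)$, and absorb the decoder's randomness into the shared variable so that the deterministic joint decoder splits into local responses; the converse via orthogonal encodings and the measure-and-prepare broadcasting channel is also exactly the paper's construction.
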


\begin{proof}
Since $\mathsf{\mathcal{QQ}}^{\mathrm{if}}_d$ is defined as the convex hull of single quantum realizations, and since $\mathsf{\mathcal{CC}}^{\mathrm{if}}_d$ is convex, it is sufficient to show that every behavior of the form Eq.~\eqref{eq:if-quantum} belongs to $\mathsf{\mathcal{CC}}^{\mathrm{if}}_d$.

Let $\mathcal T^*$ denote the adjoint of the broadcasting channel, defined by
\begin{equation}
\label{eq:dual-channel}
\operatorname{tr}[\mathcal T(\rho)A] = \operatorname{tr}[\rho\,\mathcal T^*(A)]
\end{equation}
for every input operator $\rho$ and output operator $A$. Define
\begin{equation}
\label{eq:effective-povm}
E_{bc} := \mathcal T^*\!\left(M^B_b\otimes M^C_c\right).
\end{equation}
Because $\mathcal T$ is completely positive and trace preserving, $\mathcal T^*$ is completely positive and unital. Hence $E_{bc}\succeq0$, and
\begin{equation}
\label{eq:effective-povm-normalization}
\sum_{b,c}E_{bc} = \mathcal T^*\!\left[\left(\sum_b M^B_b\right)\otimes\left(\sum_c M^C_c\right)\right] = \mathcal T^*(\iden_B\otimes\iden_C) = \iden_d.
\end{equation}
Therefore $\{E_{bc}\}_{b,c}$ is a POVM on $\mathbb C^d$. Moreover,
\begin{equation}
\label{eq:collapsed-statistics}
p(b,c\vert x) = \operatorname{tr}(\rho_xE_{bc}).
\end{equation}
Thus the broadcasting channel followed by the two fixed local measurements is equivalent to a single POVM on Alice's $d$-dimensional input space, with the pair $(b,c)$ regarded as one joint outcome.

By the Frenkel--Weiner theorem~\cite{frenkel2015classical}, the statistics of a single POVM on a $d$-dimensional system can be reproduced, with shared randomness, by a classical message with at most $d$ values. Applying the theorem to the outcome alphabet $\mathcal B\times\mathcal C$, Eq.~\eqref{eq:collapsed-statistics} admits a decomposition
\begin{equation}
\label{eq:fw-decomposition}
p(b,c\vert x) = \sum_{\lambda,m} p(\lambda)\,D^A_{\lambda}(m\vert x)\,q_{\lambda}(b,c\vert m), \qquad |\mathcal M|\le d.
\end{equation}
For each fixed $\lambda$, the conditional distribution $q_{\lambda}(b,c\vert m)$ is a classical channel from $m$ to the joint outcome $(b,c)$. Since all alphabets are finite, this channel is a convex combination of deterministic maps $g_\nu:m\mapsto\bigl(h^B_\nu(m),h^C_\nu(m)\bigr)$, with weights attached to whole functions $g_\nu$, not separately to each value of $m$. Absorbing the additional variable $\nu$ into the shared variable $\lambda$, the joint decoder factors into local deterministic responses $D^B_{\lambda}(b\vert m)D^C_{\lambda}(c\vert m)$, and Eq.~\eqref{eq:fw-decomposition} takes exactly the form of Eq.~\eqref{eq:if-classical}. This proves $\mathsf{\mathcal{QQ}}^{\mathrm{if}}_d\subseteq\mathsf{\mathcal{CC}}^{\mathrm{if}}_d$.

For the converse inclusion, it is enough to realize every deterministic classical strategy. Such a strategy is specified by functions $m=f(x)$, $b=h^B(m)$, and $c=h^C(m)$. Alice encodes the message into the orthogonal state $\rho_x=|f(x)\rangle\langle f(x)|$ of $\mathbb C^d$. The broadcaster applies the measure-and-prepare channel
\begin{equation}
\label{eq:if-classical-channel}
\mathcal T(\rho) = \sum_{m=0}^{d-1}\langle m|\rho|m\rangle\,|h^B(m)\rangle\langle h^B(m)|\otimes |h^C(m)\rangle\langle h^C(m)|.
\end{equation}
Local measurements in the corresponding output bases reproduce $b=h^B(f(x))$ and $c=h^C(f(x))$. Convex mixtures of deterministic strategies are obtained by shared randomness. Hence $\mathsf{\mathcal{CC}}^{\mathrm{if}}_d\subseteq\mathsf{\mathcal{QQ}}^{\mathrm{if}}_d$. The two inclusions prove Eq.~\eqref{eq:if-collapse}.
\end{proof}

As an immediate consequence, every linear input-free witness has the same maximum over the classical and quantum sets. Hence no facet of the input-free classical polytope can be violated in the unassisted PAB model with shared randomness. The same collapse holds for the commuting-operator relaxation used to outer-approximate the fully quantum model. Let $\mathsf{\mathcal{QQ}}^{\mathrm{co,if}}_d$ denote the convex hull of behaviors
\begin{equation}
\label{eq:if-commuting-behavior}
p(b,c\vert x) = \operatorname{tr}\!\left[\rho_x\,U^\dagger P^B_bP^C_cU\right],
\end{equation}
where $U:\mathbb C^d\to\mathcal K$ is an isometry satisfying $U^\dagger U=\iden_d$, and $\{P^B_b\}_b$ and $\{P^C_c\}_c$ are projective resolutions on an arbitrary Hilbert space $\mathcal K$ with $[P^B_b,P^C_c]=0$ for every $b,c$. The finite levels of the lifted SDP relaxation of Sec.~\ref{sec:inequalities} are outer approximations to this commuting-operator model.

\begin{corollary}[Hybrid and commuting-operator collapse]
\label{cor:if-commuting}
In the input-free regime,
\begin{equation}
\label{eq:if-full-collapse}
\mathsf{\mathcal{CC}}^{\mathrm{if}}_d = \mathsf{\mathcal{QC}}^{\mathrm{if}}_d = \mathsf{\mathcal{CQ}}^{\mathrm{if}}_d = \mathsf{\mathcal{CNS}}^{\mathrm{if}}_d = \mathsf{\mathcal{QQ}}^{\mathrm{if}}_d = \mathsf{\mathcal{QQ}}^{\mathrm{co,if}}_d.
\end{equation}
\end{corollary}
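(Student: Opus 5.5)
The plan is to sandwich every set in Eq.~\eqref{eq:if-full-collapse} between $\mathsf{\mathcal{CC}}^{\mathrm{if}}_d$ and a set already known to collapse onto it. The lower inclusions are easy. Each hybrid model specializes to the classical one:
\begin{itemize}
\item a classical $BC$ block is a particular quantum block (orthogonal product states with diagonal measurements), so $\mathsf{\mathcal{CC}}^{\mathrm{if}}_d\subseteq\mathsf{\mathcal{CQ}}^{\mathrm{if}}_d$;
\item a quantum block is nonsignalling, so $\mathsf{\mathcal{CQ}}^{\mathrm{if}}_d\subseteq\mathsf{\mathcal{CNS}}^{\mathrm{if}}_d$;
\item orthogonal encoding gives $\mathsf{\mathcal{CC}}^{\mathrm{if}}_d\subseteq\mathsf{\mathcal{QC}}^{\mathrm{if}}_d$;
\item tensor-product realizations are commuting-operator realizations (take $U$ the Stinespring isometry and $P^B\otimes\iden$, $\iden\otimes P^C$ after a Naimark dilation), so $\mathsf{\mathcal{QQ}}^{\mathrm{if}}_d\subseteq\mathsf{\mathcal{QQ}}^{\mathrm{co,if}}_d$.
\end{itemize}
Combined with Theorem~\ref{thm:if-collapse}, it then suffices to prove three upper inclusions into $\mathsf{\mathcal{CC}}^{\mathrm{if}}_d$: for $\mathsf{\mathcal{CNS}}^{\mathrm{if}}_d$, for $\mathsf{\mathcal{QC}}^{\mathrm{if}}_d$, and for $\mathsf{\mathcal{QQ}}^{\mathrm{co,if}}_d$.

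\textbf{The $\mathcal{CNS}$ inclusion.} The key observation is that without inputs, nonsignalling is vacuous. For fixed $(m,\lambda)$, a nonsignalling block $p^{\mathcal{NS}}_{BC}(b,c\vert m,\lambda)$ is just an arbitrary joint distribution on $\mathcal B\times\mathcal C$. Any such distribution is a convex combination of point masses $\delta_{b,b_0}\delta_{c,c_0}$, which are local deterministic responses. The same remark used in the proof of Theorem~\ref{thm:if-collapse} applies here: the channel $m\mapsto(b,c)$ should be decomposed as a mixture of whole deterministic functions $g_\nu$, not pointwise in $m$. The mixing variable $\nu$ is then absorbed into $\lambda$, which yields the form of Eq.~\eqref{eq:if-classical}. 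This also covers $\mathcal{CQ}$.

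\textbf{The $\mathcal{QC}$ inclusion.} This follows from the argument already given in Sec.~\ref{sec:pab}: the broadcaster's measurement produces $p(t\vert x)$ from a single POVM on $\mathbb C^d$. By Frenkel--Weiner this is simulable with a $d$-valued message, after which classical post-processing of $t$ is absorbed into local responses exactly as in the $\mathcal{CNS}$ step. Alternatively, $\mathcal{QC}$ is a special case of $\mathsf{\mathcal{QQ}}^{\mathrm{if}}_d$, so the theorem applies directly.

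\textbf{The commuting-operator inclusion.} This is the only step needing new care, and I expect it to be the main obstacle. I would define $E_{bc}:=U^\dagger P^B_bP^C_cU$. Because $P^B_b$ and $P^C_c$ are commuting projections, their product is a projection, hence positive, so $E_{bc}\succeq0$. Summing over $b,c$ and using $\sum_bP^B_b=\sum_cP^C_c=\iden$ gives $\sum_{b,c}E_{bc}=U^\dagger U=\iden_d$. Thus $\{E_{bc}\}$ is a POVM on $\mathbb C^d$ with $p(b,c\vert x)=\tr(\rho_xE_{bc})$. From here the proof of Theorem~\ref{thm:if-collapse} goes through verbatim: apply Frenkel--Weiner on the joint outcome alphabet, then factor the decoder into local deterministic responses. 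Convex hulls are preserved because $\mathsf{\mathcal{CC}}^{\mathrm{if}}_d$ is convex.

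The subtle points are making sure positivity uses commutation, and that no finite-dimensionality of $\mathcal K$ is needed. The latter holds because $E_{bc}$ always lives on $\mathbb C^d$. Chaining all the inclusions closes the cycle and proves Eq.~\eqref{eq:if-full-collapse}.
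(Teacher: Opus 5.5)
Your proposal is correct and follows the paper's proof essentially step by step. The commuting-operator case goes through the effective POVM $E_{bc}=U^\dagger P^B_bP^C_cU$ and Frenkel--Weiner. The $\mathcal{CQ}$ and $\mathcal{CNS}$ cases use the observation that nonsignalling is vacuous without inputs, so the channel $m\mapsto(b,c)$ is decomposed into whole deterministic functions absorbed into $\lambda$. The $\mathcal{QC}$ case uses the earlier $\mathcal{QC}=\mathcal{CC}$ argument.

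The only cosmetic difference is the reverse inclusion into $\mathsf{\mathcal{QQ}}^{\mathrm{co,if}}_d$. You pass through Stinespring and Naimark dilations of tensor-product realizations, whereas the paper simply notes that the deterministic measure-and-prepare realization is already a commuting-operator one.
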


\begin{proof}
Consider a commuting-operator behavior of the form Eq.~\eqref{eq:if-commuting-behavior}. Since $P^B_b$ and $P^C_c$ are commuting projectors, $\Pi_{bc}:=P^B_bP^C_c$ is itself a projector. The completeness relations give
\begin{equation}
\label{eq:commuting-projector-normalization}
\sum_{b,c}\Pi_{bc} = \left(\sum_bP^B_b\right)\left(\sum_cP^C_c\right) = \iden_{\mathcal K}.
\end{equation}
Consequently, $E_{bc}:=U^\dagger\Pi_{bc}U$ defines a POVM on $\mathbb C^d$. The same Frenkel--Weiner argument used in Theorem~\ref{thm:if-collapse} gives $\mathsf{\mathcal{QQ}}^{\mathrm{co,if}}_d\subseteq\mathsf{\mathcal{CC}}^{\mathrm{if}}_d$. The converse inclusion holds because the deterministic realization constructed in the proof of Theorem~\ref{thm:if-collapse} is also a commuting-operator realization.

The equality $\mathsf{\mathcal{QQ}}^{\mathrm{if}}_d=\mathsf{\mathcal{CC}}^{\mathrm{if}}_d$ is Theorem~\ref{thm:if-collapse}. The equality $\mathsf{\mathcal{QC}}^{\mathrm{if}}_d=\mathsf{\mathcal{CC}}^{\mathrm{if}}_d$ follows by restricting the general equality $\mathsf{\mathcal{QC}}_d=\mathsf{\mathcal{CC}}_d$ to a single measurement setting for each receiver.

It remains to consider the classical-message hybrid models. In the $\mathsf{\mathcal{CQ}}^{\mathrm{if}}_d$ and $\mathsf{\mathcal{CNS}}^{\mathrm{if}}_d$ models, conditioning on the classical message $m$ and shared variable $\lambda$ leaves only a joint distribution $q_{\lambda}(b,c\vert m)$. Since there are no receiver inputs, there are no nontrivial nonsignalling constraints beyond positivity and normalization. Every finite channel $m\mapsto(b,c)$ is a convex combination of deterministic maps, and the corresponding convex weights can be absorbed into the shared randomness. Thus both hybrid models reduce to Eq.~\eqref{eq:if-classical}. The reverse inclusions are immediate.
\end{proof}

The input-free assumption is essential. If Bob and Charlie receive measurement inputs $y$ and $z$, the single effective POVM is replaced by the family
\begin{equation}
\label{eq:if-input-family}
E_{bc\vert yz} = \mathcal T^*\!\left(M^B_{b\vert y}\otimes M^C_{c\vert z}\right).
\end{equation}
The Frenkel--Weiner theorem can be applied separately to each POVM in this family, but it does not generally provide one common $d$-valued encoding that is simultaneously compatible with every choice of $(y,z)$. The individual simulations may require setting-dependent codebooks, while Alice's preparation device has no access to the receivers' inputs. This obstruction is precisely what allows quantum advantages to reappear when independent receiver inputs are present.

\subsection{Joint-output PAM equivalence and entanglement assistance}
\label{subsec:input-free-pam-equivalence}

The proof above has an important interpretational consequence. In the input-free regime, the observed data do not distinguish a distributed two-receiver readout from a single-receiver readout with joint outcome $o=(b,c)\in\mathcal O:=\mathcal B\times\mathcal C$. In particular, the classical input-free PAB model can be rewritten as
\begin{equation}
\label{eq:if-pam-classical}
p(o\vert x) = \sum_{\lambda,m}p(\lambda)\,D^A_{\lambda}(m\vert x)\,D^O_{\lambda}(o\vert m),
\end{equation}
which is exactly the standard input-free PAM model with output alphabet $\mathcal O$. Conversely, every deterministic decoder $o=g_{\lambda}(m)$ specifies deterministic values $b=h^B_{\lambda}(m)$ and $c=h^C_{\lambda}(m)$, so the input-free PAB and joint-output PAM descriptions generate the same classical correlation set.

Consider the minimal binary-output case $|\mathcal X|=3$, $d=2$, and $|\mathcal B|=|\mathcal C|=2$. Then the joint output alphabet has four values. One representative nontrivial facet of the corresponding classical polytope is
\begin{equation}
\label{eq:input-free-Wif}
W_{\mathrm{if}} :=  2\left[ p(00|0)+p(10|1)+p(01|2) \right] + \sum_{x=0}^{2}p(11|x) \le 4 .
\end{equation}
Identifying the four joint outcomes as $00\mapsto0$, $10\mapsto1$, $01\mapsto2$, and $11\mapsto\perp$, this inequality becomes
\begin{equation}
\label{eq:input-free-S1}
W_{\mathrm{if}} := \frac14 \sum_{x=0}^{2} \left[2p(o=x|x)+p(o=\perp|x) \right] \le 1 .
\end{equation}
Thus Eq.~\eqref{eq:input-free-Wif} is precisely the four-output input-free PAM witness of Ref.~\cite{Svegborn2026}, written with the joint outcome $o=(b,c)$.

We now explain how the entanglement-assisted realization fits this equivalence. Suppose Alice and the receiving device share $\omega_{AR}\in\mathcal D(\mathcal H_A\otimes\mathcal H_R)$ before the protocol. Upon receiving $x$, Alice measures $\{M_{a\vert x}\}_a$ on $A$, sends the classical message $a\in\{0,1\}$, and remotely prepares the subnormalized state
\begin{equation}
\label{eq:ea-assemblage}
\sigma^R_{a\vert x} = \operatorname{tr}_A\!\left[(M_{a\vert x}\otimes\iden_R)\omega_{AR}\right]
\end{equation}
on $R$. The nonsignalling condition gives $\sum_a\sigma^R_{a\vert x}=\rho_R$ independently of $x$, but the pair $(a,\sigma^R_{a\vert x})$ can carry more information about $x$ than the classical bit $a$ alone.

In the corresponding input-free PAM protocol, the receiver uses a message-dependent four-outcome POVM $\{N_{o\vert a}\}_{o\in\mathcal O}$, and the behavior is
\begin{equation}
\label{eq:input-free-ea-pam}
p_{\mathrm{EA}}(o\vert x) = \sum_{a=0}^{1}\operatorname{tr}\!\left[N_{o\vert a}\sigma^R_{a\vert x}\right].
\end{equation}
The same behavior can be implemented in the distributed PAB readout. Since Bob's and Charlie's output spaces are unrestricted, each four-outcome POVM $\{N_{o\vert a}\}_{o}$ admits a Naimark realization with fixed binary local measurements and an $a$-dependent isometry:
\begin{equation}
\label{eq:ea-pab-effective-povm}
N_{bc\vert a} = U_a^\dagger\left(M^B_b\otimes M^C_c\right)U_a, \qquad o=(b,c).
\end{equation}
Here $U_a:\mathcal H_R\to\mathcal H_B\otimes\mathcal H_C$ is an isometry depending on the communicated message $a$, while $\{M^B_b\}_b$ and $\{M^C_c\}_c$ are fixed binary local measurements. Operationally, the four POVM outcomes are encoded into the two binary registers $b$ and $c$.

Using the two-qubit entanglement-assisted strategy of Ref.~\cite{Svegborn2026} for the witness $ W_{\mathrm{if}}$, one obtains \begin{equation}
\label{eq:S1-ea-value}
W_{\mathrm{if}}^{\mathrm{EA}} = \frac{9+2\sqrt3}{12} \approx 1.0387.
\end{equation}
This violation should not be interpreted as a genuinely multipartite advantage of the unrestricted input-free PAB scenario. In the absence of receiver inputs, the joint output $o=(b,c)$ is a single four-valued outcome, and the witness is affinely equivalent to the known four-output input-free PAM witness. The result is therefore a distributed implementation of the entanglement-assisted PAM advantage, not a new separation between input-free PAB and input-free PAM.

\section{Activation of nonclassicality in the prepare and broadcast scenario}
\label{sec:quantum-broadcasting}

The no-go theorem of Sec.~\ref{sec:A NO-GO THEOREM} shows that independent receiver inputs are necessary for a quantum advantage within the  PAB with shared randomness model. We now investigate how nonclassicality can be  activated once these measurement choices are restored. By activation, we mean that a resource which appears classical in a simpler scenario can nevertheless lead to the violation of a PAB inequality when embedded into the full prepare-and-broadcast architecture. In this section we consider two related forms of activation. 

The first form concerns the broadcasting channel itself. A broadcasting channel may be entangling and can therefore generate Bell-nonlocal correlations between Bob and Charlie. However, if enough noise is added, the state shared by Bob and Charlie may no longer violate any Bell inequality, even if it remains entangled. A standard way to model this situation is to consider a noisy two-qubit state of the form $v\rho + (1-v)\tfrac{\mathbb{I}}{4}$. For the maximally entangled state $\rho=\ket{\Phi^+}\bra{\Phi^+}$, with $\ket{\Phi^+}=2^{-\nicefrac12}(\ket{00}+\ket{11})$, the CHSH threshold is $v=\tfrac{1}{\sqrt2}\simeq0.7071$. More generally, for correlation Bell tests with projective measurements, the Werner state does not violate any Bell inequality whenever $v\le \tfrac{1}{G_3}$, where $\tfrac{1}{G_3}\simeq0.697$ where $G_3$ is Grothendieck's constant of order three. It is also known that this hidden nonclassicality can be activated when multiple copies of the noisy state are available \cite{palazuelos2012superactivation,cavalcanti2011quantum}.

In the single-copy regime, Ref.~\cite{bowles2021single} showed that combining a Bell scenario with broadcasting can lower the relevant visibility threshold to $v=0.559$. Here we obtain a similar effect, but with a different placement of the noise. In Ref.~\cite{bowles2021single}, the visibility parameter acts on the entangled state initially shared between Alice and Bob, while the broadcasting channel from Bob to Charlie is ideal. In our case, the qubit states prepared by Alice are ideal, and we instead add white noise to the output of the broadcasting channel. We show that, using fully classical PAB inequalities, the channel visibility threshold can be reduced to $v_{\mathrm{ch}}\simeq0.554$. This shows that the PAB scenario can reveal nonclassicality of the broadcasting stage in a regime where standard Bell tests would not detect it.

The second form of activation concerns Alice's preparations. Here the goal is to find sets of qubit states that look classical in the standard PAM scenario, but nevertheless violate a PAB inequality after being sent through a broadcasting channel. General methods for proving PAM classicality of a set of states are known \cite{de2021general}, but they can give loose bounds on the critical visibility. We therefore use an efficient linear-programming method \cite{renner2023classical} to test a set of preparations against a very large finite sets of measurements that in practice can assure the PAM classicality of the prepared states.

For activation of the preparations, it is important not to compare only with the fully classical bound. A violation of a fully classical PAB inequality could still be explained by a classical preparation message followed by a quantum broadcasting device. Consequently, when the aim is to certify nonclassicality in the preparation stage, the relevant benchmarks are the hybrid bounds $\beta_{\mathcal{CQ}}$ or $\beta_{\mathcal{CNS}}$. A violation above these bounds cannot be attributed only to the broadcasting device.

We now describe the numerical parametrization used in the searches below. Alice prepares pure qubit states $\rho_x = \frac12\left(\iden+\vec r_x\cdot\vec\sigma\right)$, with $\|\vec r_x\|_2=1$, and Bob and Charlie measure the dichotomic observables $B_y=\vec b_y\cdot\vec\sigma$, $C_z=\vec c_z\cdot\vec\sigma$, where $\|\vec b_y\|_2=\|\vec c_z\|_2=1$. The witness value is optimized over the preparation Bloch vectors, the local
measurement directions, and the broadcasting map.

For an inner-point optimization, we parametrize the broadcasting map as a qubit-to-two-qubit isometry. We start from a unitary on $\mathcal H_B\otimes\mathcal H_C\simeq\mathbb C^4$, 
\begin{equation}
U_{\mathrm{BC}}(\mathbf p) =\exp\left[-i\sum_{k=1}^{15}p_k\Lambda_k\right],
\qquad
\mathbf p\in\mathbb R^{15},
\end{equation}
where $\{\Lambda_k\}_{k=1}^{15}$ is a basis of traceless Hermitian generators of $SU(4)$. Fixing an embedding $E:\mathcal H_M\to\mathcal H_B\otimes\mathcal H_C$, the isometry is $U(\mathbf p)=U_{\mathrm{BC}}(\mathbf p)E$ and $U(\mathbf p)^\dagger U(\mathbf p)=\iden_2.$ Thus, for each preparation, the output state shared by Bob and Charlie is $\mathcal T(\rho_x) =U(\mathbf p)\rho_x U(\mathbf p)^\dagger =\tau_x^{BC}$, where $\mathcal T$ now denotes the broadcasting channel defined by the parameterized isometry $U(\mathbf p)$. The correlators are then
\begin{align}
\langle B_y\rangle^{[x]} &= \tr\!\left[ \tau_x^{BC}(B_y\otimes\iden) \right],
\nonumber\\
\langle C_z\rangle^{[x]}
&= \tr\!\left[ \tau_x^{BC}(\iden\otimes C_z) \right],
\nonumber\\
\langle B_yC_z\rangle^{[x]}
&= \tr\!\left[ \tau_x^{BC}(B_y\otimes C_z) \right].
\end{align}
Specifically, given a two-body PAB inequality $W = \sum_{x,y,z} \alpha_{xyz} \langle B_yC_z\rangle^{[x]} \le \beta$, we obtain a quantum lower bound $W^{Q}$  via the maximization
\begin{equation}
W^{Q} = \max_{\mathbf p,\{\vec r_x\},\{\vec b_y\},\{\vec c_z\}}
\sum_{x,y,z} \alpha_{xyz} \langle B_yC_z\rangle^{[x]}.
\label{eq:WCC_quantum_optimization}
\end{equation}

\subsection{Activation of nonclassicality in the broadcasting channel}
\label{subsubsec:three-prep-two-body}
Consider a broadcasting channel generating the noisy output state 
\begin{equation}
\tau_x^{BC}(v_{\mathrm{ch}})= v_{\mathrm{ch}}\,\tau_x^{BC}+(1-v_{\mathrm{ch}})\frac{\iden_4}{4}.
\end{equation}
Since $B_y$ and $C_z$ are traceless, the white-noise contribution vanishes for two-body correlators resulting in the relation
\begin{equation}
\langle B_yC_z\rangle^{[x]}_{v_{\mathrm{ch}}}= v_{\mathrm{ch}} \langle B_yC_z\rangle^{[x]}
\end{equation}
Consequently, $W^{Q}(v_{\mathrm{ch}}) = v_{\mathrm{ch}}\,W^{Q}$. Therefore, for any witness  involving only two-body correlators, the critical channel visibility is
\begin{equation}
v_{\mathrm{ch}}^{\ast}=\frac{\beta}{W^{Q}}.
\label{eq:WCC_channel_visibility}
\end{equation}
The best channel-noise robustness we found comes from the inequality $W_{\mathcal{CC}}^{(2)}$ in Eq.~\eqref{eq:CC2}. The numerical optimization gives $(W_{\mathcal{CC}}^{(2)})^Q\simeq 10.828$. This agrees, up to numerical precision, with the SDP outer approximation reported in Sec.~\ref{subsec:qqineqs}. One set of preparations, measurements, and an isometry achieving this value is given in Appendix~\ref{app:config_wcc2}. Since the corresponding fully classical bound is $\beta_{\mathcal{CC}}=6$, we obtain $v_{\mathrm{ch}}^{\ast} = \tfrac{6}{10.828} \simeq 0.554$. This value is below the Werner-state reference threshold $\tfrac{1}{G_3}\simeq0.697$ and also slightly below the value $0.559$ found in Ref.~\cite{bowles2021single}. These thresholds refer to different noise placements, but the comparison shows that the PAB scenario can reveal broadcasting nonclassicality at low channel visibilities.

\begin{figure}[t]
\centering
\includegraphics{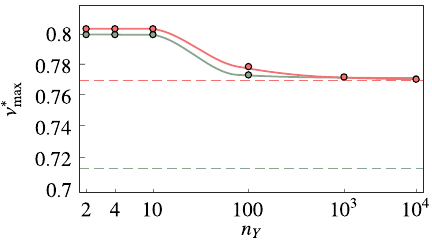}
    \caption{Critical PAM visibility as a function of the number $n_Y$ of Bob's measurement settings. The green curve corresponds to the three preparations optimizing the violation of $W_{\mathcal{CC}}^{(1)}$, whereas the red curve corresponds to the three preparations optimizing the violation of $\mathcal{W}_{\mathcal{CNS}}^{(3)}$. The horizontal dashed lines indicate the corresponding PAB thresholds, $v_{\mathrm{prep}}^{\ast}\simeq 0.713$ for $W_{\mathcal{CC}}^{(1)}$ and $v_{\mathrm{prep}}^{\ast}\simeq 0.769$ for $\mathcal{W}_{\mathcal{CNS}}^{(3)}$. In the first case, the smallest PAM visibility obtained by the linear program remains above the PAB threshold, whereas in the second case the sampled PAM visibility approaches the corresponding PAB threshold as $n_Y$ increases}
    \label{fig:pam_visibility_combined} 
\end{figure}

\subsection{Activation of nonclassicality in the prepared states}
\label{subsubsec:three-prep-two-body-activation}

In a standard prepare-and-measure scenario, improving noise tolerance often requires a prohibitively large number of measurement settings. Here, we demonstrate that a prepare-and-broadcasting architecture can match or surpass these noise thresholds using only two local measurements per receiver. Because achieving comparable critical visibilities in a standard unassisted PAM setup would require thousands of measurement settings ($n_Y>10^3$)---an infeasible requirement for practical experiments---the broadcasting scenario offers a highly efficient, experimentally viable route to activating preparation nonclassicality.   As already discussed, to identify activation in the prepared states, it is not enough to compare the quantum value with the fully classical bound. A violation of the fully classical model could still be explained by a classical preparation message together with a quantum broadcasting device. For this reason, the relevant benchmark is the classical-quantum bound, in which Alice's communicated message remains classical, while Bob and Charlie are allowed to share arbitrary bipartite quantum correlations conditioned on that message. Thus, any value above $\beta_{\mathcal{CQ}}$ cannot be reproduced with a classical preparation message, even if the broadcasting stage is quantum.

We now add noise to Alice's preparations. The noisy states are $\rho_x(v_{\mathrm{prep}}) = v_{\mathrm{prep}}\rho_x + (1-v_{\mathrm{prep}})\frac{\iden_2}{2}$, where $v_{\mathrm{prep}}\in[0,1]$. 
Unlike the channel-noise case, preparation noise does not automatically imply that every witness scales as $v_{\mathrm{prep}}$. For a fixed isometry and fixed measurements, a two-body witness behaves as $W(v_{\mathrm{prep}}) = v_{\mathrm{prep}}W + (1-v_{\mathrm{prep}})W_0$, where
\begin{equation}
W_0 = \sum_{x,y,z} \alpha_{xyz} \tr\!\left[ U(\mathbf p)\frac{\iden_2}{2}U(\mathbf p)^\dagger (B_y\otimes C_z) \right].
\end{equation}
Thus, the critical visibility is in general
\begin{equation}
v_{\mathrm{prep}}^{\ast} = \frac{\beta-W_0}{W-W_0}.
\label{eq:prep_visibility_general}
\end{equation}
When the witness is balanced, meaning that $\sum_x \alpha_{xyz}=0$ for all $y,z$, the offset $W_0$ vanishes for any isometry and measurements. In this case the simpler formula $v_{\mathrm{prep}}^{\ast} = \tfrac{\beta}{W}$ applies.

We first consider the fully classical witness $W_{\mathcal{CC}}^{(1)}$ in Eq.~\eqref{eq:CC}. The optimized fully quantum value is $(W_{\mathcal{CC}}^{(1)})^Q \simeq 12.944$, which is larger than the corresponding classical--quantum bound $\beta_{\mathcal{CQ}}\simeq9.2376$. This value also agrees, up to numerical precision, with the SDP upper bound of Sec.~\ref{subsec:qqineqs}. The corresponding optimal strategy is given in Appendix~\ref{app:config_wcc1}.

Since $W_{\mathcal{CC}}^{(1)}$ is balanced, the preparation-noise threshold is $v_{\mathrm{prep}}^{\ast} = \tfrac{\beta_{\mathcal{CQ}}}{(W_{\mathcal{CC}}^{(1)})^Q} \simeq 0.713$. As can be seen in Fig.~\ref{fig:pam_visibility_combined}, the smallest visibility found in the standard PAM test, with up to $10^4$ randomly sampled projective measurements, is approximately $v_{\mathrm{LP}}^{\ast}\simeq0.769$. Thus, the PAB witness detects nonclassicality at a visibility where all sampled PAM tests remain classical, even though this conclusion should be understood within the numerical scope of the finite measurement sampling.

Motivated by this result, we also ask whether activation can be certified against the stronger classical--$\mathcal{NS}$ benchmark. For three preparations, we optimize the witness $\mathcal W_{\mathcal{CNS}}^{(3)}$ in Eq.~\eqref{eq:WNS3}. The numerical optimization gives $(\mathcal W_{\mathcal{CNS}}^{(3)})^Q \simeq 3\sqrt{3} \simeq 5.196$, in agreement, up to numerical precision, with the SDP upper bound presented in Sec.~\ref{subsec:qqineqs} (see Appendix~\ref{app:config_wcns3}). Since the classical--$\mathcal{NS}$ bound is $4$, and since this witness is balanced, the corresponding preparation-noise threshold is $v_{\mathrm{prep}}^{\ast} \left(\mathcal W_{\mathcal{CNS}}^{(3)}\right) \simeq 0.769$.  As shown in Fig.~\ref{fig:pam_visibility_combined}, increasing Bob's measurement alphabet improves the PAM noise tolerance, approaching $v^\ast\simeq0.773$ near the end of the sampled range (with up to $10^4$ measurement settings). Clearly the PAM and PAB thresholds tend to the same value, so we do not claim a strict activation result in this case. Nevertheless, the broadcasting witness reaches essentially the same robustness using only two settings per receiver, while the PAM test requires several thousand sampled measurements to approach it.

\section{Conclusions and outlook}
\label{secFINAL REMARKS:VIOL}

In this work, we introduced and studied the prepare-and-broadcast scenario with dimension restriction, a natural extension of the standard prepare-and-measure setting in which the communicated system is processed by a broadcasting transformation before being accessed by multiple observers. This framework contains both prepare-and-measure and Bell scenarios as special cases, while allowing one to study how dimension-restricted systems can be distributed among different receivers. We developed a hierarchy of models for the PAB scenario, depending on whether the message sent by Alice and the resource used by the broadcaster are classical, quantum, or nonsignalling. Within this hierarchy, we derived PAB inequalities for the fully classical and classical--nonsignalling models in the simplest binary scenarios and analyzed their violations by quantum broadcasting strategies. We also introduced numerical tools, such as semidefinite-programming relaxations to test classical compatibility, obtain hybrid bounds, and upper-bound the fully quantum broadcasting value of linear witnesses.

Furthermore, we prove analytically that for scenarios with a single measurement setting per party (all of them equipped with shared randomness), the hierarchy of sets collapses, showing the impossibility of non-classicality in this setup. However, by allowing the parties to share entangled systems, we show that violation of prepare-and-broadcast witnesses can be found. 

Our results show that broadcasting can reveal nonclassical features that are not visible in the corresponding simpler scenarios. First, we found that PAB inequalities can detect nonclassicality of a noisy broadcasting channel below standard Bell-inequality thresholds. Second, we studied activation at the level of the prepared states. In this case, the relevant comparison is not only with the fully classical model, but also with hybrid models in which the broadcasting device is allowed to be quantum or nonsignalling. We showed that broadcasting inequalities can be more noise-robust than standard PAM witnesses for the same preparations, and we provided evidence that this effect persists even when the preparations are tested against large finite sets of PAM measurements.

Overall, the PAB scenario provides a simple setting in which communication constraints, broadcasting maps, and multipartite nonclassical correlations can be studied together. The present work focused mainly on the simplest nontrivial case of two receivers, binary measurements, binary outcomes, and qubit messages. The reason is the computational complexity of the problem. However, a natural next step is to extend the analysis to more receivers, higher-dimensional communicated systems, larger input alphabets, and measurements with more outcomes. Another interesting scenario would be to allow inputs to the broadcasting channel, a situation in which the classical-classical $\mathcal{CC}$ and quantum-classical $\mathcal{QC}$ sets no longer coincide. Such extensions may reveal new multipartite structures and stronger forms of nonclassicality that do not appear in ordinary prepare-and-measure or Bell scenarios alone.

From a more applied perspective, it would be interesting to explore whether PAB scenarios can be used in semi-device-independent information-processing tasks. Possible directions include randomness certification, multipartite dimension witnessing, distributed random access codes, certification of broadcasting channels, and semi-device-independent cryptographic protocols. More generally, prepare and broadcast scenarios may serve as a useful bridge between prepare-and-measure protocols and network-based quantum information tasks, offering a controlled way to study how nonclassical information can be distributed, shared, and certified.

\acknowledgements
We thank Armin Tavakoli and Carles Roch i Carceller for discussions and feedback on the manuscript. We acknowledge financial support from the EU Horizon Europe (QSNP, grant no. 101114043), the Danish National Research Foundation grant bigQ (DNRF 142), the Simons Foundation (Grant No. 1023171, R.C.), the Brazilian National Council for Scientific and Technological Development (CNPq, Grants No. 403181/2024-0 and 301687/2025-0), the National Institute of Science and Technology for Applied Quantum Computing through CNPq process No. 408884/2024-0, the Financiadora de Estudos e Projetos (Grant No. 1699/24 IIF-FINEP), SENAI CIMATEC through the EMBRAPII Competence Center in Quantum Technologies, a guest professorship from the Otto M\o nsted Foundation, and the Coordenação de Aperfeiçoamento de Pessoal de Nível Superior -- Brasil (CAPES) -- Finance Code 001. We also thank the High-Performance Computing Center (NPAD) at UFRN for providing computational resources.
\newpage

\bibliography{2-references}

\appendix

\section{Optimal configurations}
\label{Optimal_configurations}
The configurations below specify explicit quantum strategies attaining the
values reported in Table~\ref{tab:qq_witness_values}. In all cases, the
optimization is performed over pure qubit preparations, dichotomic
projective measurements for Bob and Charlie, and a general
qubit-to-two-qubit broadcasting isometry.

\subsection{Configuration for \texorpdfstring{$W_{\mathcal{CC}}^{(1)}$}{WCC(1)}}
\label{app:config_wcc1}

For the inequality $W_{\mathcal{CC}}^{(1)}$ in Eq.~\eqref{eq:CC}, we obtain
$(W_{\mathcal{CC}}^{(1)})^Q=12.944$. One strategy attaining this value is
\begin{equation}
\begin{aligned}
\vec r_0 &= (-0.478199,-0.107883,-0.871600),\\
\vec r_1 &= (+0.250024,+0.932540,+0.260494),\\
\vec r_2 &= (+0.254570,-0.726206,+0.638607),\\[2mm]
\vec b_0 &= (-0.082521,+0.038075,+0.995862),\\
\vec b_1 &= (-0.173429,-0.984571,+0.023272),\\
\vec c_0 &= (-0.548986,-0.011222,-0.835756),\\
\vec c_1 &= (-0.743867,+0.462533,+0.482416).
\end{aligned}
\end{equation}
The associated isometry, written in the computational basis, is
\begin{equation}
U_{\mathrm{opt}}^{(1)}
=
\begin{pmatrix}
 0.435343+0.426970i & -0.100139+0.326535i \\
 0.137877+0.360192i &  0.334318-0.501002i \\
 0.193901-0.170731i & -0.382132-0.546579i \\
-0.081218+0.637245i & -0.275280-0.004776i
\end{pmatrix}.
\end{equation}
Evaluating Eq.~\eqref{eq:CC} with these parameters gives the value
reported above.

\subsection{Configuration for \texorpdfstring{$W_{\mathcal{CC}}^{(2)}$}{WCC(2)}}
\label{app:config_wcc2}

For the inequality $W_{\mathcal{CC}}^{(2)}$ in Eq.~\eqref{eq:CC2}, the optimized
value is $(W_{\mathcal{CC}}^{(2)})^Q=10.828$. A corresponding numerical strategy is
\begin{equation}
\begin{aligned}
\vec r_0 &= (+0.346743,-0.937953,-0.003529),\\
\vec r_1 &= (-0.002956,+0.002670,-0.999992),\\
\vec r_2 &= (-0.243094,+0.661345,+0.709597),\\[2mm]
\vec b_0 &= (-0.617353,-0.151021,-0.772054),\\
\vec b_1 &= (+0.562743,+0.601009,-0.567546),\\
\vec c_0 &= (+0.628887,+0.492636,-0.601507),\\
\vec c_1 &= (+0.171344,+0.666810,+0.725263).
\end{aligned}
\end{equation}
The optimized isometry is
\begin{equation}
U_{\mathrm{opt}}^{(2)}
=
\begin{pmatrix}
 0.302656-0.638238i &  0.039246-0.217752i \\
 0.024188+0.014229i & -0.355177-0.570226i \\
-0.009192+0.025650i &  0.517734+0.428141i \\
 0.651198+0.274706i & -0.219738+0.009898i
\end{pmatrix}.
\end{equation}
Substitution into Eq.~\eqref{eq:CC2} gives the value quoted in
Table~\ref{tab:qq_witness_values}.

\subsection{Configuration for \texorpdfstring{$W_{\mathcal{CNS}}^{(3)}$}{WCNS(3)}}
\label{app:config_wcns3}

For the inequality $W_{\mathcal{CNS}}^{(3)}$ in
Eq.~\eqref{eq:WNS3}, we find
$(W_{\mathcal{CNS}}^{(3)})^Q=5.196$. One
configuration realizing this value is
\begin{equation}
\begin{aligned}
\vec r_0 &= (+0.866026,+0.000000,-0.500000),\\
\vec r_1 &= (+0.000000,+0.000000,+1.000000),\\
\vec r_2 &= (-0.866025,+0.000000,-0.500000),\\[2mm]
\vec b_0 &= (-0.206988,+0.940398,+0.269828),\\
\vec b_1 &= (-0.702801,-0.707550,+0.073780),\\
\vec c_0 &= (-0.439119,-0.242599,+0.865055),\\
\vec c_1 &= (-0.558163,+0.199290,-0.805442).
\end{aligned}
\end{equation}
The corresponding isometry is
\begin{equation}
U_{\mathrm{opt}}^{\mathcal{CNS},3}
=
\begin{pmatrix}
-0.406025-0.219587i &  0.013242-0.408367i \\
-0.086760+0.528581i & -0.522423-0.245225i \\
 0.226016+0.485636i &  0.436995-0.376957i \\
-0.331814+0.320896i &  0.122818+0.389685i
\end{pmatrix}.
\end{equation}
These parameters reproduce the value reported for
$W_{\mathcal{CNS}}^{(3)}$.

\end{document}